\newtheorem{prop}{Proposition}[section]
\newtheorem{theorem}[prop]{Theorem}
\newtheorem{lemma}[prop]{Lemma}
\newtheorem*{theorem*}{Theorem}
\theoremstyle{definition}
\newtheorem{defn}[prop]{Definition}
\newtheorem{example}[prop]{Example}
\def\vp{\varphi}
\def\F{\mathcal{F}}
\def\L{\ensuremath{\mathcal{L}}}
\def\M{\ensuremath{\mathcal{M}}}
\def\RR{\ensuremath{\mathbb{R}}}
\def\proj{\ensuremath{\mathrm{proj}}}
\def\Mborel{{\C M}^+}
\newcommand{\xto}[1]{\xrightarrow{#1}}
\newcommand{\OP}[1]{\operatorname{#1}}
\newcommand{\B}[1]{\mathbf{#1}}
\newcommand{\C}[1]{\mathcal{#1}}
\newcommand{\AL}[1]{{#1}}
\author{Jarek K\k{e}dra}
\address{University of Aberdeen and University of Szczecin}
\email{kedra@abdn.ac.uk}
\author{Assaf Libman}
\address{University of Aberdeen}
\email{a.libman@abdn.ac.uk}
\author{Victoria Steblovskaya}
\address{Bentley University}
\email{vsteblovskay@bentley.edu}
\begin{document}

\title{European baskets in discrete-time continuous-binomial market models}

\begin{abstract}
We consider a discrete-time incomplete multi-asset market model with continuous price jumps.
For a wide class of contingent claims, including European basket call options, we compute the bounds of the interval containing the no-arbitrage prices.
We prove that the lower bound coincides, in fact, with Jensen's bound.
The upper bound can be computed by restricting to a binomial model for which an explicit expression for the bound is known by an earlier work of the authors.
We describe explicitly a maximal hedging strategy  which is the best possible in the sense that its value is equal to the upper bound of the price interval of the claim.
Our results show that for any $c$ in the interval of the non-arbitrage contingent claim price at time $0$, one can change the boundaries of the price jumps to obtain a model in which $c$ is the upper bound at time $0$ of this interval. The lower bound of this interval remains unaffected.
\end{abstract}

\maketitle

\section{The main results}
\label{Sec:main results}

{\bf Discrete time continuous-binomial market model.}
We consider a {\em discrete-time} market model, see e.g \cite[\S 4.5]{MR3931730} or \cite[\S 3]{pliska}, with $m$ risky assets $S_1,\dots,S_m$ and a bond $S_0$ whose prices at time $k=0,\dots,n$ denoted $S_i(k)$, are random processes described as follows.
We fix the initial values $S_i(0)$ of the assets $(i=0,1,\ldots, m)$ and parameters $R>0$ (the interest rate) and $0<D_i< R< U_i$.
For time $k=1,2,\dots, n$ the random process is defined by
\begin{itemize}
\item $S_0(k) = R^kS_0(0)$, and
\item $S_i(k) = \Psi_i(k) S_i(k-1)$, for $i=1,2,\ldots,m$, where $\Psi_i(k)$ are random variables with values in $[D_i, U_i]$.
\end{itemize}
We call $\Psi_i(k)$ the {\em price jumps} at time $k$.
We emphasize that the price jumps are not assumed to be independent of each other nor identically distributed.

A {\em European basket call option} is a contingent claim with pay-off given by
\begin{equation}\label{E:European option F}
F = \left(\sum_{i=0}^m c_i \cdot S_i(n) -K\right)^+,
\end{equation}
where $K>0$ and $c_i\geq 0$ for $i=1,2,\ldots,m$ and $x^+:=\max\{x,0\}$.
Notice that there is no assumption on $c_0$.
The {\em rational values} of $F$ at time $k$ are the possible market values of the option at time $k$ so that no arbitrage occurs.
They are known to form an open interval $(\Gamma_{\min}(F,k),\Gamma_{\max}(F,k))$, where
$\Gamma_{\min}(F,k)$ and $\Gamma_{\max}(F,k)$ depend on the ``state of the world'' at time $k$, namely the
history of the market up to time $k$, and in particular they depend on the (current) values of the assets 
$S_i$ at time $k$.

The main result of this paper is the computation of $\Gamma_{\min}(F,k)$ and $\Gamma_{\max}(F,k)$.
It extends the main results of the authors' previous work \cite{KLS1} in which we consider discrete time {\em binomial} models, i.e ones in which the price jumps $\Psi_i(k)$ take values in (the discrete) set $\{D_i,U_i\}$ rather than the entire interval $[D_i,U_i]$.

\noindent
\subsection{Computation of $\Gamma_{\min}(F,k)$ and $\Gamma_{\max}(F,k)$}
For every $1 \leq i \leq m$ set
\begin{equation}\label{Ebi from DU}
b_i=\frac{R-D_i}{U_i-D_i}
\end{equation}
and reorder the assets $S_1,\dots,S_m$, if necessary, so that 
\[
b_1 \geq \dots \geq b_m.
\]
Define $q_1,\dots,q_m$ by
\begin{equation}\label{E:def qi for main theorem}
q_i=\left\{
\begin{array}{ll}
1-b_1  & \text{if $i=1$} \\
b_i-b_{i+1} & \text{if $1<i<m$} \\
b_m        & \text{if $i=m$}
\end{array}\right.
\end{equation}
For any $0 \leq i \leq m$ and any $0 \leq j \leq m$ define numbers $\chi_i(j)$ as follows.
\begin{eqnarray}\label{E:def q_i}
&& \chi_i(j) = \left\{
\begin{array}{ll}
U_i    & \text{if $i \leq j$} \\
D_i    & \text{if $i>j$}
\end{array}\right.
\qquad \text{ for $1 \leq i \leq m$, and}
\\
\nonumber
&& \chi_0(j)=R.
\end{eqnarray}
To streamline the notation, set for every $k \geq 0$
\begin{equation}\label{E:def Pkm}
\C P_k(m)=\{0,\dots,m\}^k.
\end{equation}
Thus, any $J \in \C P_k(m)$ is a sequence $J=(j_1,\dots,j_k)$ with $0 \leq j_1,\dots,j_k \leq m$.
For such $J$ set
\begin{eqnarray}\label{E:def qJ and chiJ}
q_J &=& \prod_{j \in J} q_j \\
\nonumber
\chi_i(J) &=& \prod_{j \in J} \chi_i(j).
\end{eqnarray}

\begin{theorem}\label{T:rational values interval ends}
With the setup of the market model and notation above, the extremal values of the rational values of $F$ at time $0 \leq k \leq n$ are given by
\begin{eqnarray*}
\Gamma_{\min}(F,k) &=& R^{k-n} \cdot \left(R^{n-k} \sum_{i=0}^m c_i \cdot S_i(k) - K\right)^+ \\
\Gamma_{\max}(F,k) &=& 
R^{k-n} \cdot \sum_{J \in \C P_{n-k}(m)} q_J \cdot \left(\sum_{i=0}^m c_i \cdot \chi_i(J) \cdot S_i(k)  - K\right)^+.
\end{eqnarray*}
\end{theorem}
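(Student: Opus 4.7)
The plan is to invoke the standard characterisation of the open interval $(\Gamma_{\min}(F,k),\Gamma_{\max}(F,k))$ as the set of discounted conditional expectations $E_Q[R^{k-n}F\mid\mathcal{F}_k]$ as $Q$ ranges over all equivalent martingale measures of the model, and then to treat the two bounds separately.

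For the lower bound I would argue directly by the conditional Jensen inequality. Under any equivalent martingale measure $Q$ the martingale identity $E_Q[\sum_{i=0}^m c_i S_i(n)\mid\mathcal{F}_k]=R^{n-k}\sum_{i=0}^m c_i S_i(k)$ holds, and applying Jensen to the convex function $x\mapsto(x-K)^+$ yields
\[
E_Q[R^{k-n}F\mid\mathcal{F}_k]\ \geq\ R^{k-n}\Bigl(R^{n-k}\sum_{i=0}^m c_iS_i(k)-K\Bigr)^+.
\]
Tightness is obtained by exhibiting a sequence of equivalent martingale measures whose one-step conditional distributions concentrate on the deterministic risk-free path $\Psi_i(\ell)\equiv R$ (for instance, uniform measures on boxes of shrinking radius around that path, suitably perturbed to preserve the marginal mean condition and equivalence). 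Along such a sequence, the conditional distribution of $\sum_i c_iS_i(n)$ given $\mathcal{F}_k$ collapses to a Dirac mass and Jensen becomes an equality in the limit.

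For the upper bound, the strategy is to reduce the problem to the binomial submodel in which each $\Psi_i(k)$ takes only the extreme values $D_i$ and $U_i$, and then to apply the authors' previous paper \cite{KLS1}, where the displayed formula is proved in that binomial setting. The reduction is powered by a one-step convexity lemma: if $V\colon\RR^m\to\RR$ is convex and $\mu$ is a probability measure on $\prod_i[D_i,U_i]$ with marginal means $R$, then replacing the conditional distribution of each coordinate of $\mu$ by its two-point marginalisation on $\{D_i,U_i\}$ of the same mean yields a measure on $\{D_i,U_i\}^m$ with identical marginals and weakly larger integral $\int V(\psi_1 s_1,\dots,\psi_m s_m)\,d\mu$. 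Since the one-step upper value $(s_1,\dots,s_m)\mapsto\sup_\mu E_\mu[V(\Psi_1 s_1,\dots,\Psi_m s_m)]$ is again convex whenever $V$ is (a supremum of convex functions), a backward induction starting from the convex terminal value $V_n=F$ shows that the supremum defining $R^{k-n}\Gamma_{\max}(F,k)$ is unchanged when the martingale measures are restricted to the binomial submodel.

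The principal obstacle will be carrying out the coordinate-by-coordinate replacement at each time step while respecting the conditional martingale condition $E_Q[\Psi_i(k)\mid\mathcal{F}_{k-1}]=R$ state by state, and maintaining equivalence of the resulting measure, so that the extremal measure is a genuine equivalent martingale measure of the binomial submodel. Once this reduction is complete, the explicit formula from \cite{KLS1} is precisely the claimed expression, in which the weights $q_J$ encode the optimal binomial martingale measure that at each time step places probability $q_j$ on the jump configuration $(\chi_1(j),\dots,\chi_m(j))$; the one-step martingale identity $\sum_{j=0}^m q_j\chi_i(j)=R$ is a short telescoping check from the definitions of the $b_i$ and $q_j$.
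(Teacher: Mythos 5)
Your two bounds are attacked with the right ideas, but your upper-bound route differs from the paper's proof and your argument leaves a real gap in place. For the upper bound, the paper does \emph{not} reduce to the binomial model and cite \cite{KLS1}: it re-proves the one-step maximisation from scratch on $[0,1]^m$ (Theorem \ref{T:one step max}). There the supermodularity of $f|_\L$ is used to build an affine function $\check f$ dominating $f$ on the whole cube and agreeing with $f$ at $\rho_0,\dots,\rho_m$; since $\check f$ is affine, $E_p(\check f)$ depends on $p$ only through the mean vector $b$, giving the closed form $\sum_j q^*(\rho_j)f(\rho_j)$, and then backward induction (Theorem \ref{T:supremum Omega n}) produces the multi-step formula. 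Your plan --- convexity alone to push each one-step kernel onto the vertex set, then invoke the binomial formula from \cite{KLS1} --- is essentially the same mathematics factored more modularly; it buys a shorter argument here at the cost of importing the supermodular step from the prior paper. Your check $\sum_{j=0}^m q_j\chi_i(j)=R$ is correct, and the observation that the one-step upper value function is convex (hence backward induction iterates) is also correct.

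The genuine gap is the one you flag but do not close. Both of your extremising measures (the Dirac at the risk-free path for the infimum, and the atomic measure supported on the monotone vertices $\rho_0,\dots,\rho_m$ for the supremum) are degenerate; neither is an equivalent martingale measure of the continuous model, and the rational price interval is open precisely because these endpoints are not attained. You therefore need to show that the supremum (respectively, infimum) over non-degenerate continuous martingale measures coincides with the value at the degenerate extremiser. Your sketch for the lower bound --- ``uniform measures on boxes of shrinking radius'' --- is insufficient: a measure supported on a small box is not equivalent to Lebesgue measure on $[0,1]^{mn}$; and your upper-bound reduction likewise delivers only a degenerate binomial measure, establishing just one of the two required inequalities. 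This is exactly what Lemma \ref{L:approximation n=1} and its $n$-fold version Lemma \ref{L:n fold approximation lemma} supply: approximants with a strictly positive uniform background $\beta>0$ and a single shifted ball tuned to restore the exact mean constraint $E_p(L)=b$. Until an argument of this kind is provided, your proof gives the one-sided inequalities but not the identities asserted in the theorem.
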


\subsection{Hedging strategies}
Consider a sequence of (time changing) portfolios
\[
V_\alpha(k) = \sum_i \alpha_i(k) S_i(k), \qquad (0 \leq k \leq n-1) 
\]
for some choices of values for $\alpha_i(k)$ at time $0 \leq k \leq n-1$.
A {\em maximum hedging strategy} is a choice for $\alpha_i(k)$ at time $0 \leq k \leq n-1$ (which depends on the state of the world at that time) which minimizes the value $V_\alpha(k)$ subject to the requirement that
\[
\sum_{i=0}^m \alpha_i(k) \cdot S_i(k+1)  \geq \Gamma_{\max}(F,k+1)
\]
for any subsequent state of the world at time $k+1$.
That is, a maximum hedging strategy is a time dependant portfolio of minimum possible cost whose value is guaranteed to exceed the future rational value of the contingent claim $F$.

Our next result, Theorem \ref{T:maximal hedging}, shows that the value of any hedging portfolio $V_\beta(k)$ must always exceed $\Gamma_{\max}(F,k)$ and that there exists a hedging strategy $\alpha_i(k)$ that attains this bound.
It is a {\em minimum cost maximum hedging strategy}.
To make this result precise, given the state of the world at time $0 \leq k \leq n-1$, let $Y_i(k)$ where $0 \leq i \leq m$ be the value of $\Gamma_{\max}(F,k+1)$ at the state of the world at time $k+1$ which is obtained from the present one (at time $k$) by having the assets $S_1,\dots,S_i$ make their maximum price jumps $U_1,\dots,U_i$ and having $S_{i+1},\dots,S_m$ make their minimum price jumps $D_{i+1},\dots,D_m$.
Explicitly, for any $0 \leq t \leq m$:
\[
Y_t(k) = R^{k+1-n} \cdot \sum_{J \in \C P_{n-k-1}(m)} q_{J} \left(\sum_{i=0}^m c_i \cdot \chi_i(J) \chi_i(t) \cdot S_i(k) -K \right)^+.
\]

\begin{theorem}\label{T:maximal hedging}
Consider the continuous binomial market model above and a European option $F$. 
Any hedging strategy $\beta_i(k)$ satisfies
\[
V_\beta(k) \geq \Gamma_{\max}(F,k).
\]
There exists a maximum hedging strategy $\alpha_i(0), \dots, \alpha_i(n-1)$ such that $V_\alpha(k)=\Gamma_{\max}(F,k)$ for all $0 \leq k \leq n-1$.
In fact, the values of $\alpha_i(k)$ at time $k$ are computed as follows.
\[
\begin{bmatrix}
\alpha_0(k) \\
\alpha_1(k) \\
\vdots \\
\alpha_m(k)
\end{bmatrix}
= 
W(k) \cdot N \cdot Q \cdot 
\begin{bmatrix}
Y_0(k) \\
Y_1(k) \\
\vdots \\
Y_m(k) 
\end{bmatrix}
\]
Where $W(k),N,T$ are the following $(m+1) \times (m+1)$ matrices.
Set $\Delta_i=U_i-D_i$.
\begin{eqnarray*}
&& W(k)= 
\begin{bmatrix}
\tfrac{1}{R \cdot S_0(k)}    \\
                             & \tfrac{1}{{S_1}(k)} \\
                             &                     & \ddots \\
                             &                     &        & \tfrac{1}{S_m(k)}
\end{bmatrix}
\\
&& Q=
\begin{bmatrix}
1  &  0 & 0 & \cdots \cdots & 0 & 0 \\
-1 &  1 & 0 & \cdots \cdots & 0 & 0 \\
0  & -1 & 1 & \cdots \cdots & 0 & 0 \\
\vdots & \vdots & \vdots & \cdots \cdots &  & \vdots \\
0 &   0 & 0 & \cdots \cdots & 1 & 0 \\
0 &   0 & 0 & \cdots \cdots & -1 & 1 \\
\end{bmatrix}
\\
&& N = 
\left[
\begin{array}{c|cccc}
1       & -\tfrac{D_1}{\Delta_1} & -\tfrac{D_2}{\Delta_2} & \cdots \cdots  &  -\tfrac{D_m}{\Delta_m} \\
\hline
0       & \tfrac{1}{\Delta_1}    &  0                    & \cdots \cdots  &  0 \\
0       & 0                      & \tfrac{1}{\Delta_2}   & \cdots \cdots  &  0 \\
\vdots  & \vdots                 & \vdots                & \vdots \\
0       & 0         
             & 0                     &  \cdots \cdots &  \tfrac{1}{\Delta_m}
\end{array}
\right]
\end{eqnarray*}
(Notice that $W(k)$ depends on the state of the world, but $N$ and $Q$ do not.
Also, $R \cdot S_0(k) = S_0(0) \cdot R^{k+1}$).
\end{theorem}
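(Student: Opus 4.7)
The strategy is backward induction on $k$, relying on two identities derived from Theorem \ref{T:rational values interval ends}. The first is that the weights $q_t$ (for $0 \leq t \leq m$, with the convention $q_0 := 1-b_1$) form a risk-neutral probability distribution on the $m+1$ nested configurations $\chi(\cdot)$: one verifies $\sum_t q_t = 1$ and $\sum_t q_t \chi_i(t) = R$ for every $i \geq 1$, directly from $b_i = (R-D_i)/(U_i-D_i)$ and the telescoping definition of the $q_t$'s. The second is the backward recursion
\[
R \cdot \Gamma_{\max}(F,k) \;=\; \sum_{t=0}^m q_t \, Y_t(k),
\]
obtained by peeling off the leading index $t$ of each sequence $J \in \C P_{n-k}(m)$ in the explicit formula for $\Gamma_{\max}(F,k)$ and recognizing the tail as the factor $Y_t(k)$.

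For the lower bound, any hedging strategy $\beta$ is constrained at each of the $m+1$ nested states at time $k+1$, giving $\beta_0(k) R S_0(k) + \sum_{i \geq 1} \beta_i(k) \chi_i(t) S_i(k) \geq Y_t(k)$ for $0 \leq t \leq m$. Multiplying by $q_t$ and summing, the risk-neutral identities reduce the left side to $R \cdot V_\beta(k)$ while the recursion reduces the right side to $R \cdot \Gamma_{\max}(F,k)$, yielding $V_\beta(k) \geq \Gamma_{\max}(F,k)$.

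For the optimal hedge, define $\alpha_i(k)$ as the unique solution of the $(m+1)\times(m+1)$ linear system obtained by imposing equality in the nested-state constraints. After extracting the diagonal scaling $W(k)^{-1}$, the coefficient matrix $M$ has rows differing from one another by $\Delta_t$ only in column $t$, and a direct computation factors its inverse as $N \cdot Q$ exactly as in the statement, yielding the asserted formula $\alpha = W(k) N Q \, Y(k)$. The equality $V_\alpha(k) = \Gamma_{\max}(F,k)$ then follows from the same weighted-sum argument as in the lower bound, now with equality throughout.

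The principal obstacle is proving that $\alpha$ is actually a hedge: the constraint must hold at every $\Psi = (\Psi_1,\ldots,\Psi_m) \in \prod_i [D_i, U_i]$, not merely at the $m+1$ nested corners. Writing the hedge value as the affine function $L(\Psi) = \alpha_0(k) R S_0(k) + \sum_i \alpha_i(k) \Psi_i S_i(k)$ and denoting $G(\Psi) := \Gamma_{\max}(F,k+1)$ at the state produced by jumps $\Psi$, the formula for $\Gamma_{\max}$ displays $G$ as a nonnegative linear combination of positive parts of affine functions $f_J$ with nonnegative coefficients in $\Psi$; since both $L$ and the $Y_t(k)$'s split along the same sum, the problem reduces to the following key lemma: \emph{for every affine $f(\Psi) = a_0 + \sum_i a_i \Psi_i$ with $a_i \geq 0$, the affine interpolant $L_f$ of $f^+$ through $\chi(0),\ldots,\chi(m)$ satisfies $L_f \geq f^+$ on $\prod_i[D_i,U_i]$.} This is proved in two steps: (i) the difference $L_f - f^+$ is concave (as $L_f$ is affine and $f^+$ convex), so its minimum on the box is attained at one of the $2^m$ vertices; (ii) at the vertex specified by $S = \{i : \Psi_i = U_i\}$, setting $A_t := f(\chi(t))$ (so that $A_0 \leq A_1 \leq \cdots \leq A_m$) and $B_i := A_i - A_{i-1} \geq 0$, the needed inequality
\[
A_0^+ + \sum_{i \in S} (A_i^+ - A_{i-1}^+) \;\geq\; \Bigl( A_0 + \sum_{i \in S} B_i \Bigr)^+
\]
is verified by short case analysis on the signs of $A_0, \ldots, A_m$, using that $A_i^+ - A_{i-1}^+$ equals $B_i$ once $A_{i-1} \geq 0$ and vanishes once $A_i \leq 0$. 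This final combinatorial step is the main technical hurdle.
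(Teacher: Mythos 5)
Your proposal is correct, and it follows the paper closely in the lower bound (risk-neutral weighting by $q_t$), the recursion $R\cdot\Gamma_{\max}(F,k) = \sum_t q_t Y_t(k)$, and the matrix factorisation $M^{-1}=W(k)\,N\,Q$. The genuine divergence is in the hardest step: showing that the affine interpolant $\alpha$, determined by equality at the $m+1$ nested states $\chi(0),\dots,\chi(m)$, actually dominates $\Gamma_{\max}(F,k+1)$ over the entire box $\prod_i[D_i,U_i]$ of price jumps. The paper proves this in Lemma \ref{L:linear programming for hedging} by abstracting exactly the property needed --- $\Gamma_{\max}(F,k+1)$ is fibrewise convex-supermodular (Proposition \ref{P:European basket is fibrewise convex-supermodular}) --- and running a lattice extremal argument (Claim 2 of that lemma, mirroring Claim 1 in the proof of Theorem \ref{T:one step max}) to establish the constraint at all $2^m$ vertices, followed by convexity to cover the whole box (Claim 3). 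You instead exploit the explicit form of $\Gamma_{\max}(F,k+1)$ as a nonnegative combination $\sum_J q_J (f_J)^+$ of positive parts of affine functions with nonnegative coefficients, pass to a single-summand key lemma by linearity of the interpolation map, and then verify the vertex inequality by direct sign analysis on the nondecreasing sequence $A_0\le\cdots\le A_m$. Both are sound. Your case analysis checks out: with $\ell$ the first index with $A_\ell\ge 0$, the nontrivial case reduces (after cancellation) to $A_{\ell-1}\ge A_0+\sum_{i\in S,\,i<\ell}(A_i-A_{i-1})$, which holds because the right side drops the nonnegative increments $A_i-A_{i-1}$ for $i\notin S$. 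The trade-off is that the paper's argument is more modular --- the supermodularity lemma is formulated once and reused --- whereas yours is more elementary and exposes the combinatorics concretely, at the cost of being tied to the specific positive-part structure of the European basket payoff.
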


\AL{This result extends our previous result in \cite{KLS3} which computes a similar hedging strategy in discrete time binomial models, i.e models in which $\Psi_i(k) \in \{D_i,U_i\} \subset [D_i,U_i]$.}

\noindent
{\bf Changing the parameters of the model.}
Keeping $R$ fixed, we may change the values of $U_i$ and $D_i$ to obtain different models for the same market. 
This has the effect of changing the limits of the price jumps of the assets $S_i$, and consequently the random processes $S_i$ are changed.
Clearly the values of $\Gamma_{\min}(F,k)$ and $\Gamma_{\max}(F,k)$ depend on these parameters, and we therefore write $\Gamma_{\min}(F,k;U_i,D_i)$ and $\Gamma_{\max}(F,k;U_i,D_i)$ to emphasise this dependence.
We will be interested in the rational prices of $F$ at time $0$, namely $\Gamma_{\min}(F,0;U_i,D_i)$ and $\Gamma_{\max}(F,0;U_i,D_i)$.

\begin{theorem}\label{T:shrinking ends}
Consider a market model with some $0<D_i<R<U_i$ and the European basket $F$ in \eqref{E:European option F}.
Then

\begin{enumerate}
\item 
$\Gamma_{\min}(F,0;U_i,D_i)=(R^n \sum_{i=0}^m c_iS_i(0) -K)^+$ and in particular it is independent of the values of $U_i,D_i$.

\item
For every $c$ in the open interval $( \, \Gamma_{\min}(F,0;U_i,D_i) \,,\, \Gamma_{\max}(F,0;U_i,D_i)\,)$ there exist $D_i \leq d_i < R < u_i \leq U_i$ such that $\Gamma_{\max}(F,0;u_i,d_i)=c$.

More precisely, consider the functions $u_i, d_i \colon [0,1) \to \RR$ defined by
\begin{eqnarray*}
d_i(s) &=& D_i+(R-D_i)s \\
u_i(s) &=& \frac{R-(1-b_i)d_i(s)}{b_i}
\end{eqnarray*}
Then $\varphi(s) = \Gamma_{\max}(F,0;u_i(s),d_i(s))$ is a continuous function of $s \in [0,1)$ such that $\vp(0)=\Gamma_{\max}(F,0;U_i,D_i)$ and $\lim_{s \nearrow 1} \varphi(s) = \Gamma_{\min}(F,0;U_i,D_i)$.
\end{enumerate}
\end{theorem}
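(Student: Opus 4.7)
Part (1) is immediate from Theorem~\ref{T:rational values interval ends} specialized to $k=0$: the expression for $\Gamma_{\min}(F,0)$ involves only $R$, the constants $c_i$ and $K$, and the initial values $S_i(0)$, so it does not depend on the jump bounds $U_i,D_i$.

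For part (2), the plan is to prove that $\varphi$ is continuous on $[0,1)$ with the stated boundary values and then apply the intermediate value theorem. The key initial observation is that the parameterization $(d_i(s),u_i(s))$ is designed to preserve the constant $b_i=(R-D_i)/(U_i-D_i)$: indeed $R-d_i(s)=(R-D_i)(1-s)$, and rearranging the definition of $u_i(s)$ gives $u_i(s)-d_i(s)=(R-d_i(s))/b_i$, so $(R-d_i(s))/(u_i(s)-d_i(s))=b_i$ throughout the deformation. Consequently the scalars $q_i$ of~\eqref{E:def qi for main theorem}, and hence the weights $q_J$ appearing in Theorem~\ref{T:rational values interval ends}, do not depend on $s$. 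I would also check, using linearity of $d_i,u_i$ in $s$, that $d_i(s)\in[D_i,R)$ and $u_i(s)\in(R,U_i]$ for every $s\in[0,1)$, so that the deformed models satisfy the standing hypothesis $0<d_i<R<u_i$.

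Given these invariants, Theorem~\ref{T:rational values interval ends} applied to the deformed model yields
\[
\varphi(s)=R^{-n}\sum_{J\in\C P_n(m)}q_J\left(\sum_{i=0}^m c_i\,\chi_i(J;s)\,S_i(0)-K\right)^+,
\]
where $\chi_i(J;s)$ is defined as in~\eqref{E:def q_i} using $u_i(s),d_i(s)$ in place of $U_i,D_i$ (and $\chi_0\equiv R$). Each $\chi_i(J;s)$ is a polynomial in $s$ and $x\mapsto x^+$ is continuous, so $\varphi$ is continuous on $[0,1]$. At $s=0$ the parameters collapse to $(U_i,D_i)$, giving $\varphi(0)=\Gamma_{\max}(F,0;U_i,D_i)$. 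As $s\nearrow 1$, both $u_i(s)$ and $d_i(s)$ tend to $R$, hence $\chi_i(J;s)\to R^{|J|}=R^n$ for every $i$ and every $J\in\C P_n(m)$; using the telescoping identity $\sum_J q_J=1$, the limit reduces to the expression for $\Gamma_{\min}(F,0;U_i,D_i)$ obtained in part~(1).

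Finally, for any $c\in(\Gamma_{\min}(F,0;U_i,D_i),\Gamma_{\max}(F,0;U_i,D_i))$ the intermediate value theorem applied to $\varphi$ on $[0,1)$ produces some $s^\ast\in(0,1)$ with $\varphi(s^\ast)=c$, and the desired parameters are $d_i:=d_i(s^\ast)$ and $u_i:=u_i(s^\ast)$, which lie strictly inside $(D_i,R)$ and $(R,U_i)$ respectively. The only genuinely nontrivial step is the $b_i$-invariance of the parameterization: this is what keeps the combinatorial weights $q_J$ frozen as $s$ varies and reduces the whole problem to continuity of an explicit finite sum. I do not anticipate any serious obstacles beyond recognizing and verifying this invariance.
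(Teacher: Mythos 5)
Your proposal is correct and follows essentially the same route as the paper: read off part (1) from Theorem~\ref{T:rational values interval ends} at $k=0$, verify that the deformation $(d_i(s),u_i(s))$ preserves $b_i$ so that the weights $q_J$ are frozen, note that each $\chi_i(J;s)$ is polynomial in $s$ so $\varphi$ is continuous, compute the two endpoint values, and apply the intermediate value theorem. The only cosmetic differences are that the paper verifies $b_i(s)=b_i$ by direct substitution rather than via the identity $u_i(s)-d_i(s)=(R-d_i(s))/b_i$, and it phrases the right endpoint as a limit $\lim_{s\nearrow 1}\varphi(s)$ since the model degenerates at $s=1$; your argument handles this correctly by producing $s^\ast\in(0,1)$.
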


\section{Preliminaries: Random processes and conditional expectation}
\label{Sec:probability theory}

\subsection{Non-degenerate density functions}
This section is concerned with some general results about probability measure spaces.
Our standard reference for Measure theory and Lebesgue integration are Halmos \cite{MR0033869} and Royden \cite{MR1013117}, and for Probability Theory it is Feller \cite{MR0038583}.

Throughout this paper, once $m \geq 1$ is fixed we will denote
\begin{equation}
\label{E:def Omega}
\Omega = [0,1]^m \subseteq \RR^m
\end{equation}
equipped with the usual Borel $\sigma$-algebra and probability measure $\mu$.

A {\em probability density function (pdf)} is a measurable $p \colon \Omega \to [0,\infty)$ such that $\int_{\Omega} p d\mu =1$.
It gives rise in a standard way to a probability measure on $\Omega$ which by abuse of notation we also denote by $p$.

A probability measure on $\Omega$ is called {\em absolutely continuous} with respect to $\mu$, written $\nu \ll \mu$, if for any Borel subset $E$ we have $\mu(E)=0 \implies \nu(E)=0$.
A probability measure $\nu$ on $\Omega$ is {\em non-degenerate} if $\nu \ll \mu$ and $\mu \ll \nu$.
We write $\nu \approx \mu$.

By the Radon-Nykodim theorem \cite[\S 11.5] {MR1013117} if $\nu \ll \mu$ then there exists a pdf $p \colon \Omega \to [0,\infty)$ called the Radon-Nykodim derivative, such that $\nu(E)=\int_{E} p(x)\, d\mu(x)$.
It is easy to check that $\nu \approx \mu$ if and only if $p>0$ almost everywhere\footnote{If $p>0$ a.e  then  $\int_E p\,du>0$ for any $E$ with $\mu(E)>0$ is a standard fact. If $p=0$ on $E$ with $\mu(E)=0$ then $\nu(E)=\int_E p(x)\, d\mu(x) = 0$ so $\mu$ is not absolutely continuous with respect to $\nu$.}.
We say that $p$ is non-degenerate.

\subsection{Conditional probability}
\label{SS:conditional expectation}


Consider $\Omega_{(i)} = [0,1]^{m_i}$ where $i=1,\dots,n$.
Set 
\[
\Omega = \prod_{i=1}^n \Omega_{(i)} = [0,1]^m,
\]
where $m=\sum_i m_i$, equipped with the standard Borel $\sigma$-algebra.
Let $X_{(i)}$ denote the random vector
\[
X_{(i)} \colon \Omega \xto{\ \proj_i \ } \Omega_{(i)} \subseteq \RR^{m_i}.
\]
For a non-empty $I \subseteq \{1,\dots,n\}$ set $m_{I} = \sum_{i \in I} m_i$.
Then set
\begin{align*}
& \Omega_{(I)} = \prod_{i \in I} \Omega_{(i)}  \\
& X_{(I)} \colon \Omega \xto{\ \proj_{(I)} \ } \Omega_{(I)} \subseteq \RR^{m_{I}}.
\end{align*}
Thus, $\Omega_{(I)} = [0,1]^{m_{I}}$ and $X_{(I)}$ is a random vector into $\RR^{m_{I}}$.
We will denote elements of $\Omega_{(I)}$ by $\omega_{(I)}$.
We will write $n-I$ for the complement of $I$.

For the remainder of this subsection we fix a non-degenerate (with respect to the Lebesgue measure on $\Omega$) pdf  $p \colon \Omega \to [0,\infty)$ and equip $\Omega$ with the probability measure it induces which we abusively denote by $p$.
Note that $p$ is the joint density function of the random vectors $X_{(1)}, \dots, X_{(n)}$ (because $X_{\{1,\dots,n\}}$ is the inclusion $\Omega \subseteq \RR^m$).

Given a non-empty $I \subseteq \{1,\dots,n\}$, the (joint) density function of $X_{(I)}$ is the function $p_{X_{(I)}} \colon \Omega_{(I)} \to [0,\infty)$ given by (See e.g. \cite[Chap. 2, Scet. 3]{MR2977961}) 
\begin{equation}\label{E:def pXI}
p_{X_{(I)}}(\omega_{(I)}) = \int_{\tau \in \Omega_{(n-I)}} p(\omega_{(I)},\tau) d\tau.
\end{equation}
Fubini's theorem readily implies that $p_{X_{(I)}}$ is non-degenerate (with respect to the Lebesgue measure on $\Omega_{(I)}$).

Consider some {\em disjoint} $I,J \subseteq \{1,\dots,n\}$.
The density function of $X_{(I)}$ given $X_{(J)}$ denoted $p_{X_{(I)}|X_{(J)}} \colon \Omega_{(I \cup J)} \to [0,\infty)$ is 
\begin{equation}\label{E:def pXI|XJ}
p_{X_{(I)}|X_{(J)}} \,(\omega_{(I)},\omega_{(J)}) 
= \frac{p_{X_{(I\cup J)}}(\omega_{(I)},\omega_{(J)})}{p_{X_{(J)}(\omega_{(J)})}}
\end{equation}
whenever this is defined.
Since $p_{X_{(J)}}$ is non-degenerate, $p_{X_{(I)}|X_{(J)}}$ is defined a.e.
For any $\omega_{(J)}$ we obtain a function, the (conditional) density of $X_{(I)}$ given the event $\{X_{(J)}=\omega_{(J)}\}$,
\[
p_{X_{(I)}|X_{(J)}=\omega_{(J)}} \colon \Omega_{(I)} \to [0,\infty)
\]
defined by $p_{X_{(I)}|X_{(J)}=\omega_{(J)}}(-) = p_{X_{(I)}|X_{(J)}}(-, \omega_{(J)})$.
By Fubini's theorem it is a (measurable) probability density function on $\Omega_{(I)}$.

Consider a random vector
\[
f \colon \Omega \to \RR^k.
\]
To avoid issues of convergence we  assume that $f \geq 0$, namely all the components of $f$ are non-negative.
The {\em expectation of $f$ given $X_{(I)}$} is the function
\[
E_p(f|X_{(I)}) \colon \Omega_{(I)} \to \RR
\]
where $E_p(f|X_{(I)})(\omega_{(I)})$ is the conditional expectation $E_p(f|X_{(I)}=\omega_{(I)})$, namely 
\[
E_p(f|X_{(I)})(\omega_{(I)}) = \int_{\tau \in \Omega_{(n-I)}} f(\tau,\omega_{(I)}) p_{X_{(n-I)}|X_{(I)}}(\tau,\omega_{(I)}) d\tau.
\]
By Fubini's theorem $E_p(f|X_{(I)})$ is a measurable function.

\begin{lemma}\label{L:conditional expectation in steps}
Keeping the notation above, let $I,J \subseteq \{1,\dots,n\}$ be disjoint and consider some $\omega_{(I)} \in \Omega_{(I)}$.
Set $p'=p_{X_{(J)}|X_{(I)}=\omega_{(I)}}$, a pdf on $\Omega_{(J)}$.
Let $f \geq 0$ be a random vector on $\Omega$.
Then
\[
E_p(f|X_{(I)}=\omega_{(I)}) \, = \, E_{p'}(\tau \mapsto E_p(f|X_{(I\cup J)})(\omega_{(I)},\tau)).
\]
In particular, 
\[
E_p(X_{(J)}|X_{(I)}=\omega_{(I)}) \, = \, E_{p'}(X_{(J)})
\]
where $X_{(J)}$ is viewed as a random vector from $\Omega_{(J)}$.
\end{lemma}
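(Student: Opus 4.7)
The plan is to establish the formula as a direct consequence of Fubini's theorem together with the algebraic identity that relates conditional densities in a nested fashion. Set $K = \{1,\dots,n\} \setminus (I \cup J)$, so that $n-I = J \sqcup K$ and $\Omega_{(n-I)} = \Omega_{(J)} \times \Omega_{(K)}$. (The edge case $K = \emptyset$, i.e. $I \cup J = \{1,\dots,n\}$, reduces to the definition of $E_p(f|X_{(I)} = \omega_{(I)})$ since $E_p(f|X_{(I \cup J)}) = f$ a.e., so we may assume $K \neq \emptyset$.)

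Starting from the definition, I would write
\[
E_p(f|X_{(I)}=\omega_{(I)}) = \int_{\Omega_{(n-I)}} f(\omega_{(I)},\sigma)\, p_{X_{(n-I)}|X_{(I)}}(\sigma,\omega_{(I)})\, d\sigma,
\]
split $\sigma = (\tau,\rho)$ with $\tau \in \Omega_{(J)}$, $\rho \in \Omega_{(K)}$, and apply Fubini (legal since $f \geq 0$). The key algebraic step then rewrites the conditional density as a product:
\[
p_{X_{(n-I)}|X_{(I)}}((\tau,\rho),\omega_{(I)})
= \frac{p(\omega_{(I)},\tau,\rho)}{p_{X_{(I)}}(\omega_{(I)})}
= \frac{p(\omega_{(I)},\tau,\rho)}{p_{X_{(I \cup J)}}(\omega_{(I)},\tau)} \cdot \frac{p_{X_{(I \cup J)}}(\omega_{(I)},\tau)}{p_{X_{(I)}}(\omega_{(I)})},
\]
which by \eqref{E:def pXI|XJ} equals $p_{X_{(K)}|X_{(I \cup J)}}(\rho,(\omega_{(I)},\tau)) \cdot p_{X_{(J)}|X_{(I)}}(\tau,\omega_{(I)})$. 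Substituting and regrouping gives the iterated integral
\[
\int_{\Omega_{(J)}} \left( \int_{\Omega_{(K)}} f(\omega_{(I)},\tau,\rho)\, p_{X_{(K)}|X_{(I \cup J)}}(\rho,(\omega_{(I)},\tau))\, d\rho \right) p'(\tau)\, d\tau,
\]
where the inner integral is precisely $E_p(f|X_{(I \cup J)})(\omega_{(I)},\tau)$ and the outer integral is $E_{p'}$ of that function of $\tau$. This is the claimed identity.

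The ``In particular'' statement follows by specialising to $f = X_{(J)}$ viewed as a random vector on $\Omega$: since $X_{(J)}$ is a coordinate projection factoring through $X_{(I \cup J)}$, one has $E_p(X_{(J)}|X_{(I \cup J)})(\omega_{(I)},\tau) = \tau$ a.e., so the right-hand side collapses to $E_{p'}(\tau \mapsto \tau) = E_{p'}(X_{(J)})$.

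The only real obstacle is bookkeeping: ensuring the conditional densities are all defined almost everywhere and that divisions by $p_{X_{(I)}}$ and $p_{X_{(I \cup J)}}$ are legitimate. Both are guaranteed by the non-degeneracy of $p$, which by the footnoted argument passes to all the marginals $p_{X_{(I)}}$ and $p_{X_{(I \cup J)}}$ via Fubini, so these densities are strictly positive a.e.\ and the identities above hold a.e.\ on the relevant spaces. No estimate or limit argument is needed; the lemma is essentially the tower property of conditional expectation specialised to the product-space setting of Section~2.
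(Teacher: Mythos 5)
Your proof is correct and follows essentially the same route as the paper's: Fubini's theorem plus the factorisation of the conditional density $p_{X_{(n-I)}|X_{(I)}}$ into $p_{X_{(K)}|X_{(I \cup J)}} \cdot p_{X_{(J)}|X_{(I)}}$. The only cosmetic difference is that you work from the left-hand side forward while the paper expands the right-hand side; the substance is identical.
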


\begin{proof}
We compute the right hand side using Fubini's theorem as follows
\begin{align*}
E_{p'}(\tau \mapsto & E_p(f|X_{(I\cup J)})(\omega_{(I)},\tau)) = 
\\
& =
\int_{\tau \in \Omega_{(J)}} p_{X_{(J)}|X_{(I)}=\omega_{(I)}}(\tau) \cdot E_p(f|X_{(I \cup J)})(\omega_{(I)},\tau)\, d\tau 
\\
&=
\int_{\tau \in \Omega_{(J)}} \left( \tfrac{p_{X_{(I \cup J)}}(\omega_{(I)},\tau)}{p_{X_{(I)}}(\omega_{(I)})} \cdot \int_{\theta \in \Omega_{(n-I \cup J)}} f(\omega_{(I)},\tau,\theta) \cdot \tfrac{p(\omega_{(I)},\tau,\theta)}{p_{X_{(I \cup J)}}(\omega_{(I)},\tau)} \, d\theta \right) \, d\tau 
\\
& =
\int_{\tau \in \Omega_{(J)}} \int_{\theta \in \Omega_{(n-I \cup J)}} f(\omega_{(I)},\tau,\theta) \cdot \tfrac{p(\omega_{(I)},\tau,\theta)}{p_{X_{(I)}}(\omega_{(I)})} d\theta \, d\tau 
\\
&= 
\int_{\omega \in \Omega_{(n-I)}} f(\omega_{(I)},\omega) \cdot \tfrac{p(\omega_{(I)},\omega)}{p_{X_{(I)}}(\omega_{(I)})} \, d\omega 
\\
& = 
\int_{\omega \in \Omega_{(n-I)}} f(\omega_{(I)},\omega) \cdot p_{X_{(n-I)}|X_{(I)}=\omega_{(I)}}(\omega) \, d\omega 
\\
& =
E_p(f|X_{(I)}=\omega_{(I)}) 
\end{align*}
This establishes the first claim.
We apply it to the random vector $f=X_{(J)}$ to obtain the second claim as follows
\[
E_p(X_{(J)}|X_{(I)}=\omega_{(I)}) = 
E_{p'}(\tau \mapsto E_p(X_{(J)}|X_{(I \cup J)})(\omega_{(I)},\tau)) =
E_{p'}(\tau \mapsto X_{(J)}(\tau)) = E_{p'}(X_{(J)}).
\]
Here we observe that $E_p(X_{(J)}|X_{(I \cup J)})(\omega_{(I)},\tau)=X_{(J)}(\tau)$ because with the abuse of notation for the domain of $X_{(J)}$ we have $X_{(J)}(\omega_{(I)},\tau,\theta)=X_{(J)}(\tau)$ for any $\tau \in \Omega_{(J)}$ and any $\theta \in \Omega_{(n-I \cup J)}$.
\end{proof}

\begin{lemma}\label{L:conditional expectation of truncated functions}
Let $f \colon \Omega \to \RR^d$ be a function and $I,J \subseteq \{1,\dots,n\}$ be disjoint, and assume that $f \geq 0$.
Suppose that $f$ factors through the projection $\Omega \xto{\pi_{I \cup J}} \Omega_{(I \cup J)}$, namely there exists $g \colon \Omega_{(I \cup J)} \to \RR^m$ such that $f=g \circ \pi_{I \cup J}$.
Set $p'=p_{X_{(J)}|X_{(I)}=\omega_{(I)}}$, pdf on $\Omega_{(J)}$.
Then
\[
E_p(f|X_{(I)}=\omega_{(I)}) = E_{p'}(\omega_{(J)} \mapsto g(\omega_{(I)}\,\omega_{(J)}))
\]
\end{lemma}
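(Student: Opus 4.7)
\smallskip

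\noindent\textbf{Proof plan.} The statement is essentially an immediate consequence of Lemma~\ref{L:conditional expectation in steps} combined with the observation that conditioning on a $\sigma$-algebra with respect to which $f$ is measurable returns $f$. My plan is therefore to reduce to that previous lemma and then to verify the measurability statement by a direct calculation with the pdfs defined in \eqref{E:def pXI} and \eqref{E:def pXI|XJ}.

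First, apply Lemma~\ref{L:conditional expectation in steps} to the same disjoint pair $I,J$ and the same function $f$. With $p' = p_{X_{(J)}|X_{(I)}=\omega_{(I)}}$ this yields
\[
E_p(f|X_{(I)}=\omega_{(I)}) \, = \, E_{p'}\bigl(\tau \mapsto E_p(f|X_{(I\cup J)})(\omega_{(I)},\tau)\bigr).
\]
Thus it suffices to identify the inner conditional expectation as $g(\omega_{(I)},\tau)$.

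Second, I compute $E_p(f|X_{(I \cup J)})(\omega_{(I)},\tau)$ from the definition. By the definition of conditional expectation given in the subsection, this is
\[
\int_{\theta \in \Omega_{(n-I\cup J)}} f(\omega_{(I)},\tau,\theta) \cdot p_{X_{(n-I\cup J)}|X_{(I\cup J)}=(\omega_{(I)},\tau)}(\theta)\, d\theta.
\]
The hypothesis $f = g \circ \pi_{I \cup J}$ means precisely that $f(\omega_{(I)},\tau,\theta) = g(\omega_{(I)},\tau)$ for every $\theta$, so $g(\omega_{(I)},\tau)$ factors out of the integral. The remaining integral is $\int_{\Omega_{(n-I\cup J)}} p_{X_{(n-I\cup J)}|X_{(I\cup J)}=(\omega_{(I)},\tau)}\, d\theta = 1$, since conditional densities are genuine pdfs (for this one uses non-degeneracy of $p_{X_{(I\cup J)}}$, noted after \eqref{E:def pXI}, so the conditional density is a.e.\ well-defined). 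Substituting this identification into the displayed formula from step one yields the claim.

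The only step that is not completely automatic is the ``factoring out'' argument, and even this is routine given that the conditional density integrates to one; there is no real obstacle. I would just make sure to mention, as the paper does for Lemma~\ref{L:conditional expectation in steps}, that the manipulation is justified a.e.\ on $\omega_{(I)}$ thanks to non-degeneracy of $p$ and Fubini's theorem.
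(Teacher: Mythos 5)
Your proposal is correct and follows essentially the same route as the paper: invoke Lemma~\ref{L:conditional expectation in steps}, then observe that $E_p(f|X_{(I\cup J)})(\omega_{(I)},\omega_{(J)})=g(\omega_{(I)},\omega_{(J)})$ because $f$ factors through $\pi_{I\cup J}$. Your step two merely spells out explicitly, via the integral definition and the fact that the conditional density integrates to one, what the paper states as a one-line observation.
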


\begin{proof}
Lemma \ref{L:conditional expectation in steps} gives
\[
E_p(f|X_{(I)}=\omega_{(I)})\,=\,E_{p'}(\omega_{(J)} \mapsto E_p(f|X_{(I \cup J)})(\omega_{(I)},\omega_{(J)})) =
E_{p'}(\omega^J \mapsto g(\omega^I,\omega^J))
\]
because $f(\omega_{(I)},\omega_{(J)},\tau)=g(\omega_{(I)},\omega_{(J)})$ for any $\tau \in \Omega_{(n-I\cup J)}$,  so $E_p(f|X_{(I \cup J)})(\omega_{(I)},\omega_{(J)}) = g(\omega_{(I)},\omega_{(J)})$.
\end{proof}

\subsection{Tensoring}
Given $p \colon \Omega \to \RR$ and $q \colon \Omega' \to \RR$ we obtain a function $p \otimes q \colon \Omega \times \Omega' \to \RR$ by
\begin{equation}\label{E:def tensor}
(p \otimes q)(\omega,\omega') = p(\omega) \cdot q(\omega').
\end{equation}
It is clear that if $p,q$ are pdf's then so is $p \otimes q$ and that it is non-degenerate if $p$ and $q$ are non-degenerate.

Keeping the notation above for $\Omega_{(i)}$ and the random vectors $X_{(i)}$, let $p_{(i)} \colon \Omega_{(i)} \to [0,\infty)$ be non-degenerate pdf's.
Then $p=p_{(1)} \otimes \cdots \otimes p_{(n)}$ is a non-degenerate pdf on $\Omega=\prod_i \Omega_{(i)}$.
For any $I \subseteq \{1,\dots,n\}$ we denote 
\[
p_{(I)}=\underset{i \in I}{\otimes} p_{(i)}. 
\]
This is a non-degenerate pdf on $\Omega_{(I)}$.

\begin{lemma}\label{L:tensors and conditional expectations}
Let $p_{(i)} \colon \Omega_{(i)} \to [0,\infty)$ be non-degenerate pdf's, $i=1,\dots,n$.
Set $p=p_{(1)} \otimes \dots \otimes p_{(n)}$, non-degenerate pdf on $\Omega$.
Then
\begin{enumerate}
\item \label{L:tensors p:q XI}
$p_{X_{(I)}} = p_{(I)}$ for any $I \subseteq \{1,\dots,n\}$.

\item \label{L:tensors p:q XI given XJ}
$p_{X_{(J)}|X_{(I)}=\omega_{(I)}} = p_{(J)}$ for any disjoint $I,J \subseteq \{1,\dots,n\}$, and furthermore

\item \label{L:tensors p:E XJ given XI}
$E_p(X_{(J)}|X_{(I)}=\omega_{(I)})= E_{p_{(J)}}(X_{(J)})$ where $X_{(J)}$ is viewed as a random vector on $\Omega_{(J)}$.
\end{enumerate}
\end{lemma}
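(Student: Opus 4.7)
The plan is straightforward: all three parts follow by direct substitution into the definitions \eqref{E:def pXI} and \eqref{E:def pXI|XJ}, together with Lemma \ref{L:conditional expectation of truncated functions} and one or two applications of Fubini's theorem. The statement is essentially the assertion that under a tensor product of densities the coordinate blocks $X_{(1)},\dots,X_{(n)}$ are independent, so no step is a real obstacle; the work is entirely book-keeping.

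For part \ref{L:tensors p:q XI} I would substitute $p = p_{(1)} \otimes \cdots \otimes p_{(n)}$ into \eqref{E:def pXI}. Writing the variable of integration in \eqref{E:def pXI} as a tuple indexed by $n-I$, the integrand factors as $p_{(I)}(\omega_{(I)}) \cdot p_{(n-I)}(\tau)$; by Fubini's theorem the first factor pulls outside, and the integral of $p_{(n-I)}$ over $\Omega_{(n-I)}$ equals $1$ since each $p_{(i)}$ is a pdf.

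For part \ref{L:tensors p:q XI given XJ}, apply part \ref{L:tensors p:q XI} to both the numerator and denominator in \eqref{E:def pXI|XJ}. The numerator becomes $p_{(I \cup J)}(\omega_{(I)},\omega_{(J)}) = p_{(I)}(\omega_{(I)}) \cdot p_{(J)}(\omega_{(J)})$ and the denominator is $p_{(I)}(\omega_{(I)})$, so the $p_{(I)}$ factors cancel and the ratio reduces to $p_{(J)}(\omega_{(J)})$. Non-degeneracy of $p_{(I)}$ ensures the denominator is positive almost everywhere, so the conditional density is defined a.e., as required by \eqref{E:def pXI|XJ}.

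For part \ref{L:tensors p:E XJ given XI}, I would invoke Lemma \ref{L:conditional expectation of truncated functions} with $f = X_{(J)}$, which factors through $\pi_{I \cup J}$ via the coordinate projection $g \colon \Omega_{(I \cup J)} \to \Omega_{(J)}$ sending $(\omega_{(I)},\omega_{(J)}) \mapsto \omega_{(J)}$. The hypothesis $f \geq 0$ of that lemma holds because $\Omega_{(J)} \subseteq [0,1]^{m_J}$. Part \ref{L:tensors p:q XI given XJ} identifies the pdf $p'$ appearing in that lemma with $p_{(J)}$, so the conclusion reads $E_p(X_{(J)} \mid X_{(I)}=\omega_{(I)}) = E_{p_{(J)}}(\omega_{(J)} \mapsto \omega_{(J)}) = E_{p_{(J)}}(X_{(J)})$, finishing the proof.
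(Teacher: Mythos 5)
Your proposal is correct and follows essentially the same route as the paper: parts~\eqref{L:tensors p:q XI} and \eqref{L:tensors p:q XI given XJ} are the same direct computations via \eqref{E:def pXI} and \eqref{E:def pXI|XJ}, and for part~\eqref{L:tensors p:E XJ given XI} you cite Lemma~\ref{L:conditional expectation of truncated functions} where the paper cites Lemma~\ref{L:conditional expectation in steps} directly (the former being an immediate corollary of the latter, with the same substitution $g(\omega_{(I)},\omega_{(J)})=\omega_{(J)}$ baked in), so the difference is cosmetic.
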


\begin{proof}
(\ref{L:tensors p:q XI})
Given $\omega_{(I)}$ we compute
\[
p_{X_{(I)}}(\omega_{(I)})=\int_{\tau \in \Omega_{(n-I)}} p(\omega_{(I)},\tau) \, d\tau =
\int_{\tau \in \Omega_{(n-I)}} p_{(I)}(\omega_{(I)})\cdot p_{(n-I)}(\tau) \, d\tau =
p_{(I)}(\omega_{(I)})
\]
because $p_{(J)}$ is a pdf on $\Omega_{(J)}$ for any $J$.

(\ref{L:tensors p:q XI given XJ})
Given $\omega_{(J)}$ and $\omega_{(I)}$ we use this to compute
\begin{multline*}
p_{X_{(J)}|X_{(I)}=\omega^I} (\omega_{(J)}) 
= \frac{p_{X_{(I \cup J)}}(\omega_{(I)},\omega_{(J)})}{p_{X_{(I)}}(\omega_{(I)})} =
\frac{p_{(I \cup J)}(\omega_{(I)},\omega_{(J)})}{p_{(I)}(\omega_{(I)})} \\
=
\frac{p_{(I)}(\omega_{(I)}) \cdot p_{(J)}(\omega_{(J)})}{p_{(I)}(\omega_{(I)})} 
= p_{(J)}(\omega_{(J)}).
\end{multline*}
(\ref{L:tensors p:E XJ given XI}) By Lemma \ref{L:conditional expectation in steps} and item (\ref{L:tensors p:q XI given XJ})
\begin{multline*}
E_p(X_{(J)}|X_{(I)}=\omega_{(I)}) = 
E_{p_{X_{(J)}|X_{(I)}=\omega_{(I)}}}( \omega_{(J)} \mapsto E_p(X_{(J)}|X_{(I \cup J)}(\omega_{(I)},\omega_{(J)})) \\ 
=
E_{p_{(J)}}(\omega_{(J)} \mapsto X_{(J)}(\omega_{(J)})).
\end{multline*}
\end{proof}

\subsection{Processes}\label{SS:processes}

\begin{defn}
Let $\Omega$ be a set and $n \geq 1$.
An $\RR^d$-valued {\em process} (over $\Omega$) is a sequence of functions
\[
Y^{(0)},\dots,Y^{(n)} \colon \Omega^n \to \RR^d
\]
such that for each $0 \leq k \leq n$ the function $X^{(k)}$ factors through the projection $\Omega^n \to \Omega^k$ to the first $k$ factors.

If $\Omega^n$ is equipped with a probability measure, $Y^{(0)},\dots, Y^{(k)}$ is called an $\RR^d$-valued {\em random process}.
\end{defn}

We frequently regard $Y^{(k)}$ as a function with domain $\Omega^k$.
We will sometimes ``trim'' the process to $Y^{(1)},\dots,Y^{(n)}$ or to $X^{(0)},\dots,X^{(n-1)}$.

If $\Omega = [0,1]^m \subseteq \RR^m$ the projections $L^{(k)} \colon \Omega^n \to \Omega \subseteq \RR^m$ to the $k$-th factor give a universal process in the sense that if $Y^{(0)},\dots,Y^{(n)}$ is a process then each $Y^{(k)}$ is a function of $L^{(1)},\dots,L^{(k)}$.

\subsection{Supports}
Let $\nu$ be a probability measure on a set $\Omega$.

\begin{defn}
We say that $\nu$ is supported on a measurable set $A$ if $\mu(A)=1$.
\end{defn}

Any probability measure $\nu$ on $A \subseteq \Omega$ extends to a probability measure $\tilde{\nu}$ on $\Omega$ supported on $A$ by $\tilde{\nu}(E)=\nu(A \cap E)$.
Conversely, if $\nu$ on $\Omega$ is supported by $A$ then $\nu=\widetilde{\nu|_A}$.
If $A$ is finite then for any $f \colon \Omega \to \RR$ we have $E_\nu(f)=\sum_{a \in A} \nu(\{a\}) f(a)$.

\section{Maximum and minimum expectation of random variables on $\Omega=[0,1]^m$}
\label{SEC:Omega}

\noindent
Fix some $m \geq 1$ and $\Omega=[0,1]^m$ equipped with the usual Lebesgue measure $\mu$. 
Let 
\[
L \colon \Omega \to \RR^m
\] 
denote the inclusion.
We think of it as a random vector with components $L=(\ell_1, \dots, \ell_m)$.
Thus, $\ell_i \colon \Omega \to \RR$ is the projection to the $i$th factor
\[
\ell_i(x_1,\dots,x_m)=x_i.
\]
For convenience we also set 
\[
\ell_0=1,
\] 
the constant function (random variable).

\begin{defn}
Let $\C M(\Omega)$ denote the set of {\em all} probability measures on $\Omega$ on the Borel $\sigma$-algebra.
Let $\C M^+(\Omega)$ denote the set of all the non-degenerate probability measures (with respect to the Lebesgue measure).
\end{defn}

We will often refer to the elements of $\Mborel(\Omega)$ as pdf's which are non-vanishing a.e.

Consider a non decreasing  $b \in \OP{int}(\Omega)=(0,1)^m$, namely 
\[ 
1 > b_1 \geq \cdots \geq b_m > 0.
\] 
We will also denote for convenience
\[
b_0=1 \qquad \text{and} \qquad b_{m+1}=0.
\]

\begin{defn}
The set of {\em mean-$b$} probability measures on $\Omega$ is
\[
\C M(\Omega,b) = \{ P \in \C M(\Omega) : E_P(L)=b\}.
\]
That is, $E_P(\ell_i)=b_i$ for all $1 \leq i \leq m$.
The set of non-degenerate mean-$b$ probability measures is 
\[
\Mborel(\Omega,b) = \{ P \in \C \Mborel(\Omega) : E_P(L)=b\}.
\]
\end{defn}

\begin{defn}\label{D:vertices of cube L}
Let $\L$ denote the set of vertices of the cube $\Omega=[0,1]^m$.
That is,
\[
\L=\{0,1\}^m.
\]
\end{defn}

There is a standard identification of $\L$ with $\wp(\{1,\dots,m\})$ where $\lambda \in \L$ corresponds to $\OP{supp}(\lambda)$.
This turns $\L$ into a lattice where the partial order $\preceq$ is induced by inclusion of sets and joins and meets are $\cup$ and $\cap$.
The next concept is originally due to Lov\'{a}sz \cite{MR717403}.

\begin{defn}
A function $f \colon \L \to \RR$ is called {\em supermodular} if for any $a,b \in \L$ 
\[
f(a \vee b) + f(a \wedge b) \geq f(a)+f(b).
\]
It is called {\em modular} if equality holds.
\end{defn}

\begin{defn}\label{D:convex-supermodular}
A function $f \colon \Omega \to \RR$ is called {\em convex-supermodular} if $f$ is convex and its restriction to $\L$ is supermodular.
\end{defn}

\begin{example}\label{Ex:convex-supermodular}
Let $f \colon \Omega \to \RR$ and suppose that $f=h \circ g$ for some $g \colon [0,1]^m \to \RR$ and $h \colon \RR \to \RR$ such that either
\begin{enumerate}
\item\label{Ex:convex-supermodular:1}
$h$ is convex and $g$ is affine with non-negative coefficients except the constant term, namely $g=\sum_{i=0}^m a_i \ell_i$ where $a_1,\dots,a_m \geq 0$.

\item\label{Ex:convex-supermodular:2}
$g$ is convex, $g|_\L$ is modular, and $h$ is convex and increasing.
\end{enumerate}
Then $f$ is convex-supermodular.

\noindent
{\em Proof:} Indeed, $f$ is convex as composition of convex functions and $f|_\L$ is supermodular by [Simchi-Levi, Theorem 2.2.6] for item (\ref{Ex:convex-supermodular:1}) and [Simchi-Levi Proposition 2.2.5(c)] for item   (\ref{Ex:convex-supermodular:2}). 
\hfill $\diamondsuit$
\end{example}

For every $0 \leq k \leq m$ let $\rho_k \in \L$ denote the element corresponding to $\{1,\dots,k\}$, namely  
\begin{equation}\label{E:define rho_k}
\rho_k = (\underbrace{1,\dots,1}_{\text{$k$ times}},0,\dots,0) \in \{0,1\}^m.
\end{equation}

\begin{defn}[Compare \cite{KLS1}] \label{D:upper supermodular vertex}
The {\em upper supermodular vertex} is the probability density function $q^* \colon \L \to \RR$ supported on $\{\rho_0,\dots,\rho_m\}$ with
\[ 
q^*(\rho_k)=b_k-b_{k+1}, \qquad (0 \leq k \leq m).
\] 
\end{defn}
One easily checks that $\sum_{i=0}^m q^*(\rho_k)=1$ and that $E_{q^*}(\ell_i)=b_i$, thus

\begin{prop}
$q^* \in \C M(\Omega,b)$ and it is supported on $\L \subseteq \Omega$.
\end{prop}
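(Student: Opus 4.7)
The statement has two independent parts, both of which reduce to unwinding Definition \ref{D:upper supermodular vertex}. The support claim is essentially by construction: $q^*$ is, by definition, supported on $\{\rho_0,\dots,\rho_m\}$, which is contained in $\L=\{0,1\}^m$, so once $q^*$ is viewed as a measure on $\Omega$ via the standard extension from a support described in the preceding subsection, the assertion $\supp(q^*)\subseteq\L$ is immediate.

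For the membership in $\C M(\Omega,b)$, the plan is a direct finite computation in two steps. First, to check that $q^*$ actually defines a probability measure, I would verify that the weights are nonnegative and that they sum to $1$. Nonnegativity of $q^*(\rho_k)=b_k-b_{k+1}$ follows from the chain
\[
1 \;=\; b_0 \;\geq\; b_1 \;\geq\; \cdots \;\geq\; b_m \;\geq\; b_{m+1} \;=\; 0
\]
which holds by hypothesis together with the convention $b_0=1$, $b_{m+1}=0$. The total mass is a telescoping sum,
\[
\sum_{k=0}^m q^*(\rho_k) \;=\; \sum_{k=0}^m (b_k-b_{k+1}) \;=\; b_0-b_{m+1} \;=\; 1.
\]

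Second, I would verify the mean constraint $E_{q^*}(\ell_i)=b_i$ for each $1\leq i\leq m$. The key combinatorial observation is that the $i$-th coordinate of $\rho_k$ equals $1$ precisely when $i\leq k$, and $0$ otherwise. Hence
\[
E_{q^*}(\ell_i) \;=\; \sum_{k=0}^m q^*(\rho_k)\,\ell_i(\rho_k) \;=\; \sum_{k=i}^m (b_k-b_{k+1}) \;=\; b_i - b_{m+1} \;=\; b_i,
\]
again by telescoping. Packaging the $m$ identities into the vector equation $E_{q^*}(L)=b$ then gives $q^*\in\C M(\Omega,b)$. Both computations are elementary, and there is no real obstacle: the whole content of the proposition is simply that the weights $b_k-b_{k+1}$ have been engineered precisely so that these two telescoping sums come out correctly.
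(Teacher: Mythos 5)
Your proof is correct and matches the paper's approach; the paper simply asserts ``one easily checks'' the total mass and the mean constraint, and your telescoping computations are exactly the intended verification.
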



The main result of this section is the following theorem.

\begin{theorem}\label{T:one step max}
Let $f \colon \Omega \to \RR$ be a continuous convex-supermodular function and assume that $f \geq 0$.
Let $q^*$ be the upper supermodular vertex of $\L$ (Definition \eqref{D:upper supermodular vertex}).
Then
\[
\sup_{p \in \Mborel(\Omega,b)} E_p(f) \ = \ E_{q^*}(f|_\L) \  = \ \max_{p \in \M(\Omega,b)} E_p(f). 
\]
Note that $E_{q^*}(f|_\L)= \sum_{i=0}^m q^*(\rho_i) \cdot f(\rho_i)$.
\end{theorem}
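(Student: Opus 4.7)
The proof proceeds in three stages: (I) reduce any $p \in \M(\Omega,b)$ to a measure $\hat p$ supported on the vertex set $\L$ satisfying $E_{\hat p}(f|_\L) \geq E_p(f)$, using convexity; (II) show $q^*$ maximizes $E_q(f|_\L)$ over $q \in \M(\L,b)$, using supermodularity; (III) approximate $q^*$ by non-degenerate measures to identify the supremum over $\Mborel(\Omega,b)$. For stage (I), to each $x=(x_1,\dots,x_m) \in \Omega$ I associate the product Bernoulli measure $\lambda_x$ on $\L$ defined by $\lambda_x(v) = \prod_i x_i^{v_i}(1-x_i)^{1-v_i}$, so that a random vertex $V \sim \lambda_x$ has mean $x$. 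Given $p \in \M(\Omega,b)$, define $\hat p(v) = \int_\Omega \lambda_x(v)\, dp(x)$; the $i$-th marginal of $\hat p$ equals $\int_\Omega x_i\, dp(x) = b_i$, so $\hat p \in \M(\L,b)$. Jensen's inequality applied pointwise to the convex $f$ gives $f(x) \leq E_{\lambda_x}(f|_\L)$, and integrating against $p$ yields $E_p(f) \leq E_{\hat p}(f|_\L)$.

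\textbf{Supermodular maximization on $\L$.} The set $\M(\L,b)$ is a compact polytope in $\RR^{2^m}$ and $q \mapsto E_q(f|_\L)$ is linear, so the maximum is attained. Suppose some maximizer $q$ has incomparable $v,w$ in its support; setting $\eta = \min\{q(v),q(w)\} > 0$ and transferring mass $\eta$ from each of $v,w$ to $v \vee w$ and $v \wedge w$ preserves marginals (since $v+w = (v \vee w)+(v \wedge w)$ coordinate-wise) and increases $E_q(f|_\L)$ by $\eta\bigl(f(v \vee w)+f(v \wedge w)-f(v)-f(w)\bigr) \geq 0$ by supermodularity. Iterating produces a maximizer whose support is a chain in $\L$; refining this to a maximal chain indexed by a permutation $\sigma$ of $\{1,\dots,m\}$, the marginal equations uniquely determine the masses to be $b_{\sigma(k)}-b_{\sigma(k+1)}$, which are non-negative only if $\sigma$ orders the $b_i$ decreasingly. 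The hypothesis $b_1 \geq \cdots \geq b_m$ forces $\sigma = \mathrm{id}$ (up to ties, which produce zero mass), so the maximizer equals $q^*$. Combined with stage (I), this yields $\max_{p \in \M(\Omega,b)} E_p(f) = E_{q^*}(f|_\L)$.

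\textbf{Non-degenerate approximation.} Since $\Mborel(\Omega,b) \subset \M(\Omega,b)$, the previous stages give $\sup_{\Mborel(\Omega,b)} E_p(f) \leq E_{q^*}(f|_\L)$. For the reverse inequality I approximate $q^*$ by non-degenerate measures. Fix any $\nu \in \Mborel(\Omega,b)$ (for instance a tensor product of Beta densities on $[0,1]$ with parameters chosen so that $E_\nu(\ell_i)=b_i$), construct absolutely continuous mean-$b$ mollifications $q^*_\varepsilon$ of $q^*$ whose supports lie inside $\Omega$ (mollify and then apply a small corrective tilt to restore the mean $b$), and set $p_\varepsilon = (1-\varepsilon) q^*_\varepsilon + \varepsilon \nu \in \Mborel(\Omega,b)$. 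Continuity of $f$ gives $E_{p_\varepsilon}(f) \to E_{q^*}(f|_\L)$ as $\varepsilon \to 0$. The main obstacle is stage (II) --- justifying the swap-and-chain reduction precisely enough to pin down $q^*$ uniquely, especially in the presence of ties among the $b_i$. Stage (III) is standard but requires some care near $\partial\Omega$, since each vertex $\rho_k$ lies on the boundary of $\Omega$ and cannot be realized as the mean of a measure supported strictly inside.
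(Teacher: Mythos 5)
Your proof takes a genuinely different route from the paper's. Where the paper constructs a single affine majorant $\check{f} = \sum_i \alpha_i \ell_i$ with $\alpha_i = f(\rho_i)-f(\rho_{i-1})$, shows $\check{f} \geq f$ on $\L$ by a supermodularity induction, extends the domination to all of $\Omega$ by convexity, and then gets $E_p(f) \leq E_p(\check f) = E_{q^*}(f|_\L)$ in one stroke, you factor the argument into two independent steps: a Bernoulli-randomization reduction $\M(\Omega,b) \to \M(\L,b)$ (convexity/Jensen), followed by an extremal-point argument on the lattice $\L$ (supermodularity via the swap $\{v,w\} \leadsto \{v\vee w, v\wedge w\}$). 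This is attractive conceptually — it cleanly separates the two hypotheses on $f$ and makes visible that the vertex reduction is a purely convex phenomenon while the chain structure is purely lattice-theoretic. The paper's affine-majorant trick is more economical and, incidentally, is what makes the lower-bound theorem (Theorem \ref{T:infimum Jensen}) a near-twin of this one, since the same $\check f$ gives $E_p(f) \geq \check f(b) = f(\hat b)$ — your split approach would need a separate Jensen argument there, which is fine but less symmetric.

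Two spots in your argument need to be firmed up. First, in Stage (II), ``iterating produces a maximizer whose support is a chain'' is not justified as stated: a swap can enlarge the support and the swap increments can shrink, so there is no reason for the process to terminate in finitely many steps. The standard repair is to note that the maximizers of $q \mapsto E_q(f|_\L)$ form a nonempty compact face of the polytope $\M(\L,b)$, and to choose among them one that also maximizes an auxiliary strictly order-sensitive functional such as $q \mapsto \sum_{v \in \L} q(v)\,|v|^2$ (where $|v|$ is the Hamming weight); because $|v\vee w|^2 + |v\wedge w|^2 > |v|^2 + |w|^2$ for incomparable $v,w$, any swap at such a maximizer would strictly increase the auxiliary functional without decreasing the objective, a contradiction, so its support is a chain. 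You then do need the short verification (which you only gesture at) that the unique chain-supported element of $\M(\L,b)$ with $b_1 \geq \cdots \geq b_m$ is $q^*$; a single pass through the marginal constraints along a maximal chain indexed by a permutation $\sigma$ shows the masses are $b_{\sigma(k)} - b_{\sigma(k+1)}$ and non-negativity forces $\sigma$ to sort the $b_i$. Second, Stage (III) is exactly where the paper's Approximation Lemma \ref{L:approximation n=1} earns its keep: since every $\rho_j$ lies on $\partial\Omega$, a naive mollification leaks mass outside the cube, and ``apply a small corrective tilt to restore the mean $b$'' is not yet an argument. The paper handles this by first perturbing the support points into $\operatorname{int}\Omega$ while preserving the mean, then replacing them by small cubes; if you want your stage to be self-contained, you should either reproduce that construction or cite the lemma directly, since the mixing $p_\varepsilon = (1-\varepsilon)q^*_\varepsilon + \varepsilon\nu$ only works once $q^*_\varepsilon$ is already a bona fide mean-$b$ density supported in $\Omega$.
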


In the remainder of this section we prove this theorem.
It relies on the following key observation.
Equip $\RR^m$ with the norm $\| \ \|_{\infty}$ and restrict this norm to $\Omega=[0,1]^m$.

\begin{lemma}[Approximation lemma]
\label{L:approximation n=1}
Consider some $q \in \M(\Omega,b)$ supported on a finite subset $\{x^1,\dots,x^k\}$ of $\Omega$ and set $q_i=q(\{x^i\})$.
Suppose that $b \in \OP{int}(\Omega)=(0,1)^m$.
Then for any $\epsilon>0$ and $\delta>0$ there exists $p \in \Mborel(\Omega,b)$ and $\beta<\epsilon$ such that for any continuous $f \colon \Omega \to \RR$
\[
E_p(f) = \beta \int_\Omega f\, d\mu + \sum_{i=0}^k (q_i-\tfrac{\beta}{k}) \cdot f(\xi^i)
\]
where $\xi^1,\dots,\xi^k \in \Omega$ are such that $\|\xi^i-x^i\|_\infty<\delta$.
\end{lemma}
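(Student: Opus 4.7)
The plan is to construct $p$ as a convex combination of the uniform density on $\Omega$ (with small weight $\beta$) and uniform densities supported on small disjoint balls centered at slightly perturbed points $\xi^i$ near $x^i$ (with total weight $1-\beta$). The weight $\beta$ on the background uniform is what makes $p$ non-degenerate, and the perturbations $\xi^i - x^i$ are used to compensate for the mean-shift introduced by this background.

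\textbf{Step 1 (solving the mean constraint).} The uniform density on $\Omega$ has mean $c:=(\tfrac12,\dots,\tfrac12)$. If we set $p = \beta\cdot \mathbf{1}_\Omega + \sum_{i=1}^k (q_i - \beta/k)\, u_i$, where $u_i$ is the uniform density on some ball $B(\xi^i,r)\subseteq \Omega$ (so $E_{u_i}(L)=\xi^i$), then
\[
E_p(L)=\beta c + \sum_{i=1}^k (q_i-\tfrac{\beta}{k})\xi^i.
\]
Writing $\xi^i = x^i + \eta^i$ and using $\sum q_i x^i = b$, the requirement $E_p(L)=b$ becomes $\sum_i (q_i - \beta/k)\,\eta^i = \beta\bigl(c - \bar x\bigr)$ where $\bar x := \tfrac{1}{k}\sum x^i$. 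The simplest solution is to take all $\eta^i$ equal to a common vector $\eta = \tfrac{\beta}{1-\beta}(c-\bar x)$, which is $O(\beta)$ as $\beta\to 0$.

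\textbf{Step 2 (choosing $\beta$).} Fix $\beta \in (0, \epsilon)$ small enough that (i) $\beta/k < \min_i q_i$, so that all coefficients in the mixture are non-negative; (ii) $\|\eta\|_\infty < \delta/2$; and (iii) after possibly nudging each $\xi^i$ an additional $\delta/2$ into the interior of $\Omega$ (necessary only when some $x^i$ lies on $\partial\Omega$, and the nudges can be chosen so that the mean correction is preserved), one still has $\|\xi^i - x^i\|_\infty < \delta$. Since $b\in \mathrm{int}(\Omega)$ and the mean of $q$ is $b$, this interior-adjustment step is harmless.

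\textbf{Step 3 (choosing $r$ and verifying $p\in \Mborel(\Omega,b)$).} Choose $r>0$ small enough that the balls $B(\xi^i,r)$ are pairwise disjoint and contained in $\Omega$. The density $p = \beta\,\mathbf{1}_\Omega + \sum(q_i-\beta/k)u_i$ satisfies $p\geq \beta>0$ a.e., so $p$ is non-degenerate; normalization gives $\int p\,d\mu = \beta + \sum(q_i-\beta/k)=1$; and Step~1 ensures $E_p(L)=b$. Hence $p\in \Mborel(\Omega,b)$.

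\textbf{Step 4 (the expectation identity).} For continuous $f$, by linearity,
\[
E_p(f)=\beta\int_\Omega f\,d\mu + \sum_{i=1}^k (q_i-\tfrac{\beta}{k})\, E_{u_i}(f).
\]
Since $f$ is continuous on the compact set $\Omega$, it is uniformly continuous; taking $r$ small makes $E_{u_i}(f)$ arbitrarily close to $f(\xi^i)$. The stated identity is then to be read either as exact equality with the understanding that each $f(\xi^i)$ stands for the bump-averaged $E_{u_i}(f)$, or, if literal point evaluation is intended, up to an error controlled by the modulus of continuity of $f$ at scale $r$, which can be driven below any prescribed tolerance.

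\textbf{Main obstacle.} The delicate point is precisely this last step: point evaluations $f(\xi^i)$ cannot be exactly reproduced by any non-degenerate density, so the ``equality'' must be interpreted via the uniform-continuity-to-Dirac-mass approximation inherent in shrinking $r\to 0$. Beyond this interpretive subtlety, the only constraint to juggle is ensuring that the single free parameter $\beta$ is chosen simultaneously small enough for non-negativity of weights, smallness of perturbation $\eta$, and placement of all $\xi^i$ strictly inside $\Omega$ — all three of which are manageable for $\beta$ sufficiently small thanks to $b\in\mathrm{int}(\Omega)$.
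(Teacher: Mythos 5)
Your construction of $p$ as a mixture of a small uniform background with weight $\beta$ and uniform bumps near the $x^i$ with weights $q_i - \beta/k$ is essentially the same as the paper's, and your mean-balancing analysis is sound. However, your ``Main obstacle'' is a genuine gap, not an interpretive subtlety, and the resolution you propose does not work. You cannot read $f(\xi^i)$ as shorthand for the bump average, because the lemma asserts literal point evaluation; and you cannot drive the error ``below any prescribed tolerance'' either, because the quantifier order forces $p$ (hence $r$) to be chosen \emph{before} $f$, and there is no $r>0$ that is uniformly small relative to the modulus of continuity of \emph{every} continuous $f$ on $\Omega$.

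The missing idea is the mean value theorem for integrals. Once the ball $B(y^i+\gamma^i,r)$ is fixed, for \emph{each} continuous $f$ the mean value theorem produces a point $\xi^i \in B(y^i+\gamma^i,r)$ (depending on $f$) with
\[
\frac{1}{(2r)^m}\int_{B(y^i+\gamma^i,r)} f\, d\mu \;=\; f(\xi^i),
\]
so the identity in the lemma is \emph{exact}, with $\xi^i$ chosen after $f$. Read the quantifiers again: ``there exists $p$ and $\beta$ such that for any continuous $f$ \ldots\ where $\xi^1,\dots,\xi^k\in\Omega$ are such that \ldots'' --- the $\xi^i$ are existentially quantified inside the ``for all $f$'', so they may depend on $f$. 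One then needs $r$ small enough (plus the prior interior-perturbation of the $x^i$, which the paper does by moving each $x^i$ to a $y^i\in\operatorname{int}(\Omega)$ while preserving the mean $b$) so that the balls lie in $\operatorname{int}(\Omega)$ and $\|\xi^i-x^i\|_\infty<\delta$ is guaranteed no matter where $\xi^i$ lands in its ball. Your other steps are fine, though the paper applies the shift $\gamma$ only to the first ball rather than distributing a common $\eta$ across all of them; that is a cosmetic difference.
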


\begin{proof}
Closed balls of radius $r>0$ in $\RR^m$ have the form $B(y,r)= y+[-r,r]^m$ so their volume, hence their Lebesgue measure, is $(2r)^m$.
If $y=(y_1,\dots,y_m)$ then by inspection, for any $1 \leq j \leq m$
\[
\int_{B(y,r)} x_j\, d\mu(x^1,\dots,x^m) = (2r)^m y_j.
\]

\noindent
{\em Claim:} There exist distinct $y^1,\dots,y^k$ in $\OP{int}(\Omega)=(0,1)^m$ such that $\|y^i-x^i\|_\infty < \tfrac{\delta}{3}$ and such that $\sum_{i=1}^k q_i y^i=b$.

\noindent
{\em Proof:} We show how to perturb the vectors $x^1,\dots,x^k$ in order to obtain $y^1,\dots,y^k$.
First, $\sum_{i=1}^k q_i x^i=E_q(L)=b$ since $q \in \C M(\Omega,b)$.
Suppose that for some $1 \leq j \leq m$ not all of $x^1_j,\dots,x^k_j$ are in the open interval $(0,1)$.
If $x^{i'}_j=0$ for some $i'$ then there must exist $i''$ such that $x^{i''}_j >0$ because $\sum_{i=1}^k q_i x^i_j=b_j>0$.
Since $q_i >0$ for all $i$ we can increase $x^{i'}_j$ and decrease $x^{i''}_j$ by a small positive number $<\tfrac{\delta}{3}$ so that the sum remains $b_j$ and that the new values of $x^{i'}_j$ and $x^{i''}_j$ are in $(0,1)$.
Similarly, if $x^{i'}_j=1$ for some $i'$ then there must exist some $i''$ such that $x^{i''}_j<1$ because $\sum_{i=1}^k q_i x^i_j=b_j<1$.
We can then decrease $x^{i'}_j$ and increase  $x^{i''}_j$ by at most $\tfrac{\delta}{3}$ so that the sum remains $b_j$ and the new values of $x^{i'}_j$ and $x^{i''}_j$ are in $(0,1)$.
By repeating this process we can perturb $x^1,\dots,x^k$ into $y^1,\dots,y^k$ in $\OP{int}(\Omega)$ such that $\sum_i q_i y^i=b$ and $\|x^i-y^i\|_\infty<\tfrac{\delta}{3}$.
Since the $x^i$'s admit pairwise disjoint neighbourhoods, we can perturb the $x^i$'s inside these neighbourhoods to make sure that the $y^i$'s are distinct.
\hfill q.e.d

Since $y^1,\dots,y^k$ are in the interior of $\Omega$ there exists $r<\tfrac{\delta}{3}$ such that $B(y^i,2r) \subseteq (0,1)^m$ for all $1 \leq i \leq k$.
Thus, if $\gamma \in \RR^m$ is such that $\| \gamma \|_\infty<r$ then $B(y^i+\gamma,r) \subseteq (0,1)^m$.
Set 
\[
Q=\min\{ q_1,\dots,q_m\}.
\]
For any $\beta >0$ set
\[
\gamma^1_j := \frac{\beta \cdot (\tfrac{1}{k}\sum_{i=1}^k y^i_j - \tfrac{1}{2})}{q_1-\tfrac{\beta}{k}}.
\]
Choose  $0<\beta < \min\{ \epsilon, kQ\}$ sufficiently small such that for every $1 \leq j \leq m$
\[
|\gamma^1_j| < r.
\]
Let $\gamma^1 \in \RR^m$ be the vector with the components $\gamma^1_j$ defined above and let $\gamma^2,\dots,\gamma^k \in \RR^m$ be the zero vectors.
By construction $\|\gamma^i\|_\infty<r$ for all $1 \leq i \leq k$, hence $B(y^i+\gamma^i,r) \subseteq (0,1)^m$.
Define $p \colon \Omega \to \RR$ by
\[
p = \beta+\sum_{i=1}^k \tfrac{q_i-\tfrac{\beta}{k}}{(2r)^m}  \cdot \mathbf{1}_{B(y^i+\gamma^i,r)}
\]
where $\mathbf{1}_{B(y^i+\gamma^i,r)}$ is the characteristic function.
Observe that $p>0$ because $\beta>0$ and $q_i-\tfrac{\beta}{k} > q_i-Q \geq 0$.
Next, $p$ is a pdf since
\[
\int_{\Omega} p \, d\mu = \beta + \sum_{i=1}^k \frac{q_i-\tfrac{\beta}{k}}{(2r)^m}\int_{\Omega} \mathbf{1}_{B(y^i+\gamma^i,r)}  d\mu = \sum_{i=1}^k q_i = 1.
\]
We check that $p \in \Mborel(\Omega,b)$.
Indeed, since $\gamma^i=0$ for all $i \geq 2$
\begin{align*}
E_{p}(\ell_j) &= 
\int_{x \in \Omega} \ell_j(x) \cdot p(x) \, d\mu(x) 
\\
&=
\beta\int_\Omega x_j \, d\mu +\tfrac{1}{(2r)^m} \sum_{i=1}^k (q_i-\tfrac{\beta}{k}) \int_{B(y^i+\gamma^i,r)} x_j \, d\mu 
\\
&=
\tfrac{1}{2}\beta + \sum_{i=1}^k (q_i-\tfrac{\beta}{k})(y^i_j+\gamma^i_j) 
\\
&=
\tfrac{1}{2}\beta + \sum_{i=1}^k q_i y^i_j +  (q_1-\tfrac{\beta}{k})\gamma^1_{j}  -\tfrac{\beta}{k} \sum_{i=1}^k y^i_j 
\\
&= 
\tfrac{1}{2} \beta + b_j  + \beta (\tfrac{1}{k} \sum_{i=1}^k y_j^i - \tfrac{1}{2}) - \tfrac{\beta}{k} \sum_{i=1}^k y_j^i
\\
&= b_j.
\end{align*}
Suppose that $f \colon \Omega \to \RR$ is continuous.
By the mean value theorem there exist $\xi^i \in B(y^i+\gamma^i,r)$ such that
\begin{multline*}
E_{p}(f) = 
\int_{x \in \Omega} f(x) \cdot p(x) \, d\mu(x) =
\beta \int_\Omega f\, d\mu + \sum_{i=1}^k \tfrac{1}{(2r)^m} (q_i-\tfrac{\beta}{k}) \int_{B(y^i+\gamma^i,r)} f \, d\mu =
\\
\beta \int_{\Omega} f\, d\mu + \sum_{i=1}^k(q_i-\tfrac{\beta}{k}) \cdot f(\xi^i).
\end{multline*}
By our choice $\beta<\epsilon$ and $\|\xi^i-x^i\|_\infty \leq \|\xi^i-(y^i+\gamma^i)\|_\infty+\|y^i-x^i\|_\infty +\|\gamma^i\|_\infty < r+ \tfrac{\delta}{3} +r < \delta$.
This completes the proof.
\end{proof}


\begin{proof}[Proof of Theorem \ref{T:one step max}]
For $i=0,\dots,m$ set
\begin{eqnarray*}
&& \alpha_0=f(\rho_0) \\
&& \alpha_i = f(\rho_i)-f(\rho_{i-1}).
\end{eqnarray*}
Observe that since by definition $b_0=1$ and $b_{m+1}=0$,
\begin{align*}
E_{q^*}(f|_{\L}) &= 
\sum_{i=0}^m q^*(\rho_i) \cdot f(\rho_i) 
\\
&=
\sum_{i=0}^m (b_i-b_{i+1})f(\rho_i) 
\\
& = 
f(\rho_0) + \sum_{i=1}^m b_i(f(\rho_i)-f(\rho_{i-1})) 
\\
& =
\alpha_0+\sum_{i=1}^m b_i \alpha_i.
\end{align*}
Let $\check{f} \colon \RR^m \to \RR$ be the affine function $\check{f}=\sum_{i=0}^m \alpha_i \ell_i$, namely 
\[
\check{f}(x_1,\dots,x_m)=\alpha_0+\sum_{i=1}^m \alpha_i x_i.
\]

\noindent
{\em Claim 1:} $\check{f}$ dominates $f$ on $\L$, namely $\check{f}(\lambda) \geq f(\lambda)$ for all $\lambda \in \L$.

\noindent
{\em Proof:} By construction of $\check{f}$ and by the definition of $\rho_j$ in  \eqref{E:define rho_k}, for any $0 \leq j \leq m$
\[
\check{f}(\rho_j) = \sum_{i=0}^j \alpha_i = f(\rho_j).
\]
So $f$ and $\check{f}$ coincide on $\{\rho_0,\dots,\rho_m\} \subseteq \L$.
Assume the statement of the claim is false, namely there exists $\lambda \in \L$ such that  $\check{f}(\lambda) < f(\lambda)$.
Among all these $\lambda$'s choose one which contains the longest leading run of $1,\dots,1$, namely $\lambda$ with the largest possible $k$ with $\rho_k \preceq \lambda$.
Notice that $k<m$ because $\check{f}(\rho_m)=f(\rho_m)$ and $\rho_m$ is maximal in $\L$.
In the lattice $\L$ set $\lambda' = \lambda \vee \rho_{k+1}$.
Notice that since $k$ is the largest such that $\rho_k \preceq \lambda$ it follows that $\lambda \wedge \rho_{k+1} = \rho_k$.
Since $f|_\L$ is supermodular
\[
f(\lambda \vee \rho_{k+1}) + f(\rho_k) \geq f(\lambda)+f(\rho_{k+1}).
\]
Since $\check{f}$ is affine, it is modular, hence
\[
\check{f}(\lambda \vee \rho_{k+1}) + \check{f}(\rho_k) = \check{f}(\lambda)+\check{f}(\rho_{k+1}).
\]
Subtract the first equation from the second, taking into account that $\check{f}(\rho_i)=f(\rho_i)$, to get
\[
\check{f}(\lambda \vee \rho_{k+1}) - f(\lambda \vee \rho_{k+1}) \leq \check{f}(\lambda)-f(\lambda) <0.
\]
Thus $\check{f}(\lambda') < f(\lambda')$ and $\rho_{k+1} \preceq \lambda'$, contradiction to the maximality of $k$.
\hfill q.e.d 

\noindent
{\em Claim 2:} $\check{f}$ dominates $f$ on $\Omega=[0,1]^m$, namely $\check{f}(x) \geq f(x)$ for all $x \in \Omega$.

\noindent
{\em Proof:} Since $\Omega=[0,1]^m$ is the convex hull of $\L=\{0,1\}^m$, every $x \in \Omega$ is a convex combination
\[
x=\sum_{\lambda \in \L} t_\lambda \cdot \lambda.
\]
Since $\check{f}$ is affine and $f$ is convex, Claim 1 implies that 
\[
\check{f}(x) = 
\check{f}(\sum_{\lambda \in \L} t_\lambda \cdot \lambda) = 
\sum_{\lambda \in \L} t_\lambda \check{f}(\lambda) \geq
\sum_{\lambda \in \L} t_\lambda f(\lambda) \geq
f(\sum_{\lambda \in \L} t_\lambda \lambda) =
f(x).
\]
\hfill q.e.d 

Claim 2 implies  that for any $p \in \C M(\Omega,\B b)$
\[
E_p(f) \leq E_p(\check{f}) = E_{p}(\sum_{i=0}^m \alpha_i \ell_i) = \sum_{i=0}^m \alpha_i E_p(\ell_i) = \alpha_0+\sum_{i=1}^m \alpha_i b_i = E_{q^*}(f|_{\L}).
\]
However, $q^*$ extends to $\widetilde{q^*} \in \C M(\Omega,b)$ and clearly $E_{\widetilde{q^*}}(f)=E_{q^*}(f|_\L)$ so we get
\[
\max_{p \in \C M(\Omega,\B b)} E_p(f)  = E_{q^*}(f|_\L).
\]
Notice that $E_{q^*}(f|_\L) = \sum_{i=0}^m q^*(\rho_i)\cdot f(\rho_i)$.
Lemma \ref{L:approximation n=1} applied to $q^* \in \C M(\Omega,b)$ and the continuity of $f$ easily imply that there exist $\lambda \in \Mborel(\Omega,b)$ such that $E_\lambda(f)$ is arbitrarily close to $E_{q^*}(f)$.
If follows that $\sup_{\Mborel(\Omega,\B b)} E_p(f)=E_{q^*}(f|_{\L})$.
\end{proof}

\section{$(L,b)$-stable probability measures}
\label{SEC:Omega n}

\noindent
{\bf Main results.}
We consider $n$ iterations of the random vector $L=(\ell_1,\dots,\ell_m)$ over $\Omega=[0,1]^m$ from Section \ref{SEC:Omega}.
We obtain a sequence $L^1,\dots,L^n$ of random vectors in $\RR^m$ with sample space $\Omega^n$. 
In fact
\[
L^k \colon \Omega^n \to \Omega \subseteq \RR^m
\]
is the projection to the $k$th factor.
This is the universal process on $\Omega^n$, see Section \ref{SS:processes}. 
Denote by $\Mborel(\Omega,n)$ the set of all non-degenerate probability measures on $\Omega^n$.
We will identify these with the set of pdf's $p \colon \Omega^n \to \RR$ which are non-vanishing a.e.

With the notation and terminology of Section \ref{SS:conditional expectation} we make the following definition.

\begin{defn}
Consider some $b \in \OP{int}(\Omega)=(0,1)^m$.
A pdf $p \in \Mborel(\Omega,n)$ is called {\em $(L,b)$-stable} 
if for any $0 \leq k \leq n-1$ 
\[
E_p(L^{k+1}|L^1,\dots,L^k)=b.
\]
That is, the function $E_p(L^{k+1}|L^1,\dots,L^k) \colon \Omega^k \to \RR^m$ is constant with value $b$ a.e. 
The set of all $(L,b)$-stable $p \in \Mborel(\Omega,n)$ is denoted
\[
\Mborel(\Omega,n,b).
\]
\end{defn}

\begin{defn}
A function $f \colon \Omega^n \to \RR$ is called {\em fibrewise convex-supermodular} if it is convex-supermodular at each fibre.
Namely, for any $\tau \in \Omega^{k-1}$ and any $\theta \in \Omega^{n-k}$ the function $g \colon \Omega \to \RR$ defined by $g \colon \omega \mapsto f(\tau,\omega,\theta)$ is convex-supermodular (Definition \ref{D:convex-supermodular}).
\end{defn}

\begin{example}\label{Ex:fibrewise convex supermodular}
For any $J=(j_1,\dots,j_n) \in \C P_n(m)$, see \eqref{E:def Pkm}, let $\ell_J \colon \Omega^n \to \RR$ denote the function $\ell_J=\ell_{j_1} \otimes \cdots \otimes \ell_{j_n}$, namely
\[
\ell_J \colon (\omega^1,\dots,\omega^n) \mapsto \ell_{j_1}(\omega^1) \cdots \ell_{j_n}(\omega^n).
\]
Consider $f \colon \Omega^n \to \RR$ of the form $f=h \circ g$ where $h \colon \RR \to \RR$ is convex and $g \colon \Omega^n \to \RR$ is of the form
\[
g=\sum_{J \in \C P_n(m)} a_J \cdot \ell_J
\]
where $a_J \geq 0$ for all $J \neq (0,\dots,0)$.
Then $f$ is fibrewise convex-supermodular.

\noindent 
{\em Proof:}
It is clear that if $\tau \in \Omega^{k-1}$ and $\theta \in \Omega^{n-k}$ then the restriction of $\ell_J$ to the fibre $\{\tau\} \times \Omega \times \{\theta\} \subseteq \Omega^n$ is equal to $\sum_{i=0}^n b_i \ell_i$ where $b_i \geq 0$ for all $i \geq 1$; In fact $b_i$ is the sum of all $a_J$ in which the $k$th entry is equal to $i$.
The result follows from Example \ref{Ex:convex-supermodular}(\ref{Ex:convex-supermodular:1}).
\end{example}

The main results of this section are the following two theorems.

\begin{theorem}\label{T:infimum Jensen}
Let $f \colon \Omega^n \to \RR$ be a continuous function of the form $f=h \circ g$ in Example \ref{Ex:fibrewise convex supermodular}. 
Assume that $f \geq 0$.
Suppose that $b \in \OP{int}(\Omega)=(0,1)^m$.
Then
\[
\inf_{p \in \Mborel(\Omega,n,b)} E_p(f|L^1,\dots,L^k)(\omega^1,\dots,\omega^k)  = f(\omega^1,\dots,\omega^k,b,\cdots,b).
\]
\end{theorem}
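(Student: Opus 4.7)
The proof splits into a Jensen-type lower bound and a matching upper bound obtained by explicitly constructing a family of $(L,b)$-stable non-degenerate pdfs whose conditional expectations concentrate at the point $(b,\dots,b)$. Throughout, set
\[
u_j(\omega^1,\dots,\omega^j) := E_p(f \mid L^1,\dots,L^j)(\omega^1,\dots,\omega^j), \qquad k \leq j \leq n,
\]
so $u_n = f$ and $u_k$ is the quantity in the theorem.

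\emph{Lower bound.} I would prove by downward induction on $j$ that $u_j(\omega^1,\dots,\omega^j) \geq f(\omega^1,\dots,\omega^j,b,\dots,b)$ for every $p \in \Mborel(\Omega,n,b)$. The base case $j = n$ is immediate. For the inductive step, Lemma \ref{L:conditional expectation in steps} applied with $I = \{1,\dots,j\}$ and $J = \{j+1\}$ gives
\[
u_j(\omega^1,\dots,\omega^j) = E_{p'}\bigl(\omega^{j+1} \mapsto u_{j+1}(\omega^1,\dots,\omega^j,\omega^{j+1})\bigr),
\]
where $p' = p_{L^{j+1} \mid L^1 = \omega^1,\dots,L^j = \omega^j}$ is a pdf on $\Omega$ with $E_{p'}(L^{j+1}) = b$ by $(L,b)$-stability. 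The inductive hypothesis combined with monotonicity of $E_{p'}$ yields
\[
u_j(\omega^1,\dots,\omega^j) \geq E_{p'}\bigl(\omega^{j+1} \mapsto f(\omega^1,\dots,\omega^j,\omega^{j+1},b,\dots,b)\bigr).
\]
The crucial observation is that this integrand is convex in $\omega^{j+1}$: since $g = \sum_J a_J \ell_J$ is multi-affine in the $n$ blocks with $a_J \geq 0$ for $J \neq (0,\dots,0)$ and all $b_i > 0$, partial substitution of $b$ in the trailing coordinates leaves $g$ affine in $\omega^{j+1}$ with nonnegative coefficients on $\ell_1,\dots,\ell_m$, and composition with the convex $h$ preserves convexity. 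Applying Jensen's inequality to the random vector $L^{j+1}$ under $p'$ (with mean $b$) gives $u_j(\omega^1,\dots,\omega^j) \geq f(\omega^1,\dots,\omega^j,b,b,\dots,b)$, closing the induction. Taking $j = k$ is the lower bound.

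\emph{Matching construction.} For the reverse inequality I would build, for each $\varepsilon,\delta > 0$, an $(L,b)$-stable non-degenerate pdf $p_{\varepsilon,\delta}$ whose conditional expectation is within $o(1)$ of the target. Apply the Approximation Lemma \ref{L:approximation n=1} with the atomic measure $\delta_b$ (using $b \in (0,1)^m$) to produce, for each $1 \leq i \leq n$, a non-degenerate $p^{(i)}_{\varepsilon,\delta} \in \Mborel(\Omega,b)$ with
\[
E_{p^{(i)}_{\varepsilon,\delta}}(g) = \beta \int_\Omega g\, d\mu + (1-\beta)\, g(\xi^i)
\]
for every continuous $g \colon \Omega \to \RR$, where $\beta < \varepsilon$ and $\|\xi^i - b\|_\infty < \delta$. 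Set $p_{\varepsilon,\delta} = p^{(1)}_{\varepsilon,\delta} \otimes \cdots \otimes p^{(n)}_{\varepsilon,\delta}$. By Lemma \ref{L:tensors and conditional expectations}, $p_{\varepsilon,\delta} \in \Mborel(\Omega,n,b)$, and the conditional expectation simplifies to the product integral
\[
E_{p_{\varepsilon,\delta}}(f \mid L^1,\dots,L^k)(\omega^1,\dots,\omega^k) = \int_{\Omega^{n-k}} f(\omega^1,\dots,\omega^k,\tau^{k+1},\dots,\tau^n) \prod_{i=k+1}^n p^{(i)}_{\varepsilon,\delta}(\tau^i)\, d\tau.
\]
Iterating the displayed identity from the Approximation Lemma one coordinate at a time via Fubini, and using uniform continuity of $f$ on the compact cube $\Omega^n$, this integral converges to $f(\omega^1,\dots,\omega^k,b,\dots,b)$ as $\varepsilon,\delta \to 0$. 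Together with the lower bound this gives the claimed infimum.

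The main obstacle is the inductive step: one must check that the substitution of $b$ into the trailing coordinates of $f$ preserves the structural form of Example \ref{Ex:fibrewise convex supermodular} (so that the relevant univariate restriction is convex at every stage of the descent). This is precisely where the positivity hypotheses $b \in (0,1)^m$ and $a_J \geq 0$ are invoked; everything else is routine conditioning and Fubini applied on top of the already-established Lemmas \ref{L:conditional expectation in steps}, \ref{L:tensors and conditional expectations}, and \ref{L:approximation n=1}.
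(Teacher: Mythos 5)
Your argument is correct and follows essentially the same path as the paper's proof: downward induction combining Lemma \ref{L:conditional expectation in steps}, the stability condition $E_{p'}(L)=b$, and Jensen's inequality for the lower bound, and a tensor-product of one-step approximations of $\delta_b$ for the matching infimum. The only (cosmetic) difference is that you inline the content of the $n$-fold Approximation Lemma \ref{L:n fold approximation lemma} rather than citing it, and you apply Jensen directly to the convex restriction $\omega\mapsto h(g(\omega^1,\dots,\omega^j,\omega,b,\dots,b))$ where the paper first evaluates $E_{p'}$ of the affine restriction of $g$ exactly and then applies Jensen for $h$ alone; both are valid, and neither actually needs the nonnegativity of the $a_J$ for the lower bound (that hypothesis matters only for supermodularity in the companion supremum theorem).
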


Recall the upper supermodular vertex $q^*\colon \L \to \RR$ from Definition \ref{D:upper supermodular vertex}.
It extends to an atomic probability measure on $\Omega$ supported on $\L$ which we abusively denote $q^*$. 
Let
\[
q^*{}^{\otimes k}
\]
be the obvious product probability measure on $\L^k$ as well as its extension to $\Omega^k$.

Set $q_j = q^*(\rho_j)$ for all $0 \leq j \leq m$.
For any $J \in \C P_k(m)$ denote
\begin{eqnarray*}
q_J &=& \prod_{j \in J} q_j \\
\rho_J &=& (\rho_{j_1}, \cdots ,\rho_{j_k}) \in \Omega^k.
\end{eqnarray*}

If $f \colon \Omega^n \to \RR$ is measurable and $(\omega^1,\dots,\omega^k) \in \Omega^k$, we obtain a measurable function $g \colon \Omega^{n-k} \to \RR$ by $g(-)=f(\omega^1,\dots,\omega^k,-)$.
If $Q$ is a probability measure on $\Omega^{n-k}$ we will write $E_Q(f(\omega^1,\dots,\omega^k,-))$ for $E_Q(g)$.

\begin{theorem}\label{T:supremum Omega n}
Let $f \colon \Omega^n \to \RR$ be a continuous fibrewise convex-supermodular, $f \geq 0$.
Then
\begin{multline*}
\sup_{p \in \Mborel(\Omega,n,b)} E_p(f| L^1,\dots,L^k)(\omega^1,\dots,\omega^k) = E_{{q^*}^{\otimes n-k}}(f(\omega^1,\dots,\omega^k,-))
\\
=\sum_{J \in \C P_{n-k}(m)} q_J \cdot f(\omega^1,\dots,\omega^k,\rho_J).
\end{multline*}
\end{theorem}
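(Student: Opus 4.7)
The plan is to split the claim into two inequalities. The upper bound $E_p(f|L^1,\dots,L^k)(\omega) \leq \sum_{J} q_J f(\omega,\rho_J)$ will follow by induction on $n-k$, conditioning one new coordinate at a time and applying the one-step Theorem~\ref{T:one step max} at the freshly revealed factor. The reverse inequality (attainment of the sup) will be proved by exhibiting an explicit sequence $p_\varepsilon \in \Mborel(\Omega,n,b)$ whose expectations approach the claimed value; these will be the $n$-fold product measures built from the Approximation Lemma~\ref{L:approximation n=1} applied to the degenerate measure $q^*$. The second equality in the theorem is purely notational: $q^{*\otimes(n-k)}$ is the atomic probability measure on $\Omega^{n-k}$ supported on the $(m+1)^{n-k}$ points $\rho_J$ with weights $q_J$, so $E_{q^{*\otimes(n-k)}}(f(\omega,-)) = \sum_J q_J f(\omega, \rho_J)$ by unpacking the definition.

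\textbf{Upper bound.} The base case $n-k=0$ is immediate, since $E_p(f|L^1,\dots,L^n)(\omega) = f(\omega)$ equals the single term $q_\emptyset \cdot f(\omega,\rho_\emptyset) = f(\omega)$ on the right. For the inductive step, fix $p \in \Mborel(\Omega,n,b)$ and apply Lemma~\ref{L:conditional expectation in steps} with $I=\{1,\dots,k\}$ and $J=\{k+1\}$ to write
\[
E_p(f|L^1,\dots,L^k)(\omega) \;=\; E_{p'}\bigl(\tau \mapsto E_p(f|L^1,\dots,L^{k+1})(\omega,\tau)\bigr),
\]
where $p' = p_{L^{k+1}|L^1,\dots,L^k=\omega}$. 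The conditional density $p'$ is non-degenerate on $\Omega$ (because $p$ is non-degenerate on $\Omega^n$), and the $(L,b)$-stability of $p$ gives $E_{p'}(L)=b$, so $p' \in \Mborel(\Omega,b)$. The induction hypothesis, applied pointwise in $\tau$, yields $E_p(f|L^1,\dots,L^{k+1})(\omega,\tau) \leq \sum_{J' \in \C P_{n-k-1}(m)} q_{J'} f(\omega,\tau,\rho_{J'})$. Integrating against $p'$ and using linearity plus Theorem~\ref{T:one step max} applied to each continuous convex-supermodular map $\tau \mapsto f(\omega,\tau,\rho_{J'})$ (which is convex-supermodular by the fibrewise hypothesis on $f$) bounds the result by $\sum_{j=0}^m q_j f(\omega,\rho_j,\rho_{J'})$; re-indexing collates this to $\sum_{J \in \C P_{n-k}(m)} q_J f(\omega,\rho_J)$.

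\textbf{Lower bound and main obstacle.} For each $\varepsilon > 0$, Lemma~\ref{L:approximation n=1} produces $p_\varepsilon \in \Mborel(\Omega,b)$ such that $E_{p_\varepsilon}(g) \to E_{q^*}(g)$ for every continuous $g$ on $\Omega$ as $\varepsilon \to 0$. Set $p = p_\varepsilon^{\otimes n}$. Lemma~\ref{L:tensors and conditional expectations} then shows $p \in \Mborel(\Omega,n,b)$ and that the conditional density of $(L^{k+1},\dots,L^n)$ given $(L^1,\dots,L^k) = \omega$ is exactly $p_\varepsilon^{\otimes(n-k)}$, independent of $\omega$; substituting into the definition of conditional expectation gives
\[
E_{p}(f|L^1,\dots,L^k)(\omega) \;=\; E_{p_\varepsilon^{\otimes(n-k)}}\bigl(\tau \mapsto f(\omega,\tau)\bigr).
\]
The principal technical obstacle is to verify that as $\varepsilon \to 0$ this right side converges to $\sum_J q_J f(\omega,\rho_J)$. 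Lemma~\ref{L:approximation n=1} supplies weak convergence only on the single factor $\Omega$, so I would close the argument by a secondary induction on $n-k$: peel off one factor via Fubini, writing $E_{p_\varepsilon^{\otimes r}}(h) = E_{p_\varepsilon}\bigl(\tau_1 \mapsto E_{p_\varepsilon^{\otimes(r-1)}}(h(\tau_1,-))\bigr)$, and use the uniform continuity of $f$ on the compact $\Omega^n$ to ensure that the inner expectations form an equicontinuous family of functions of $\tau_1$ converging uniformly to their $q^{*\otimes(r-1)}$-analogues; the outer weak convergence $p_\varepsilon \to q^*$ then pushes the limit through. This equicontinuity step is the delicate part; everything else is bookkeeping.
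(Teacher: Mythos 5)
Your proposal is correct and follows essentially the same route as the paper: the upper bound is proved by the same downward induction on $k$ using Lemma~\ref{L:conditional expectation in steps}, Lemma~\ref{L:Mborel n fact 1}, and Theorem~\ref{T:one step max}, and the attainment of the supremum uses the same product measures $p_\varepsilon^{\otimes n}$ built from Lemma~\ref{L:approximation n=1}. The only difference is one of packaging: the paper isolates your ``main obstacle'' as the $n$-fold Approximation Lemma~\ref{L:n fold approximation lemma} with an explicit quantitative bound derived from the uniform continuity of $f$, whereas you re-derive the same content inline via an equicontinuity argument, which amounts to the same thing.
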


\noindent 
In the remainder of this section we prove Theorems \ref{T:infimum Jensen} and \ref{T:supremum Omega n}.
Throughout we fix $b \in (0,1)^m$ and assume that it is non-increasing, namely $b_1 \geq \cdots \geq b_m$.

\begin{lemma}\label{L:Mborel n fact 1}
Consider some $p \in \Mborel(\Omega,n,b)$.
Suppose that $0 \leq k \leq n-1$ and consider $\omega^1,\dots,\omega^k \in \Omega$.
Set $p'=p_{L^{k+1}|(L^1,\dots,L^k)=(\omega^1,\dots,\omega^k)}$; See Section \ref{SS:conditional expectation}.
Then  $p' \in \Mborel(\Omega,b)$.
\end{lemma}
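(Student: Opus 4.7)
The plan is to verify the two properties that define membership in $\Mborel(\Omega, b)$: that $p'$ is a non-degenerate pdf on $\Omega$, and that $E_{p'}(L) = b$.

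First, that $p'$ is a (measurable) pdf on $\Omega$ is already recorded in the paper immediately after \eqref{E:def pXI|XJ}: Fubini's theorem guarantees this for any conditional density of the form $p_{X_{(I)} \mid X_{(J)} = \omega_{(J)}}$. So this step requires no additional work.

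Second, to establish non-degeneracy I would invoke the companion observation made right after \eqref{E:def pXI}, that if $p$ is non-degenerate then so is each marginal $p_{X_{(S)}}$. Applied to $S = \{1,\dots,k\}$ and $S = \{1,\dots,k+1\}$, this tells us that both the numerator $p_{L^1,\dots,L^{k+1}}(\omega^1,\dots,\omega^k,\cdot)$ and the denominator $p_{L^1,\dots,L^k}(\omega^1,\dots,\omega^k)$ appearing in the definition \eqref{E:def pXI|XJ} of $p'$ are strictly positive almost everywhere (the former as a function of its last argument, for a.e.\ choice of $(\omega^1,\dots,\omega^k)$). Their ratio $p'$ is therefore positive a.e.\ on $\Omega$, so $p' \in \Mborel(\Omega)$.

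Finally, the mean condition $E_{p'}(L) = b$ is a direct consequence of Lemma \ref{L:conditional expectation in steps} applied with $I = \{1,\dots,k\}$ and $J = \{k+1\}$ to the random vector $L^{k+1}$. The ``in particular'' statement of that lemma yields
\[
E_{p'}(L^{k+1}) \;=\; E_p(L^{k+1} \mid L^1,\dots,L^k)(\omega^1,\dots,\omega^k),
\]
and under the identification $\Omega_{(J)} = \Omega$ the random vector $L^{k+1}$ is simply the inclusion $L$. The right-hand side equals $b$ by the $(L,b)$-stability hypothesis on $p$, completing the verification.

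No genuine obstacle arises here: the statement is essentially unpacking definitions against the preliminaries from Section \ref{Sec:probability theory}. The only minor caveat, standard in any discussion of conditional densities, is that $p'$ is defined only for almost every conditioning value $(\omega^1,\dots,\omega^k)$, and this implicit ``a.e.'' is the mild bookkeeping one carries throughout the section.
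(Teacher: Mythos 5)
Your proposal is correct and follows essentially the same route as the paper: cite the remarks surrounding \eqref{E:def pXI} and \eqref{E:def pXI|XJ} for $p'$ being a non-degenerate pdf, then apply Lemma \ref{L:conditional expectation in steps} with $I=\{1,\dots,k\}$, $J=\{k+1\}$ together with the $(L,b)$-stability of $p$ to get $E_{p'}(L)=b$. The only cosmetic difference is that you unpack the non-degeneracy step a bit more explicitly (positivity of numerator and denominator marginals), whereas the paper simply refers back to \eqref{E:def pXI|XJ}; the substance is identical.
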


\begin{proof}
First, $p'$ is a pdf and $p'>0$ a.e., see \eqref{E:def pXI|XJ} in Section \ref{SS:conditional expectation}.
Set $I=\{1,\dots,k\}$ and $J=\{k+1\}$, subsets of $\{1,\dots,n\}$.
Write $L^I$ for the random vector $(L^1,\dots,L^k)$ and $\omega^I=(\omega^1,\dots,\omega^k) \in \Omega^k$.
We use Lemma \ref{L:conditional expectation in steps} and the fact that $p \in \Mborel(\Omega,n,b)$ to compute
\begin{multline*}
E_{p'}(L) = 
E_{p'}(\omega \mapsto L^{k+1}(\omega)) =
E_{p'}(\omega \mapsto E_p(L^{k+1}|L^{I \cup J})(\omega^I,\omega)) 
\\
=
E_p(L^{k+1}|L^I=\omega^I) =
E_p(L^{k+1}|L^1,\dots,L^k)(\omega^1,\dots,\omega^k) = b.
\end{multline*}
By definition, then, $p' \in \Mborel(\Omega,b)$.
\end{proof}

\begin{lemma}\label{L:tensors Lnb}
Let $p^1,\dots,p^n \in \Mborel(\Omega,b)$.
Then $p^1 \otimes \dots \otimes p^n \in \Mborel(\Omega,n,b)$.
\end{lemma}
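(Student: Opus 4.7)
The plan is to verify the two defining properties of $\Mborel(\Omega,n,b)$ for the tensor product $p = p^1 \otimes \cdots \otimes p^n$: namely that $p$ is a non-degenerate pdf on $\Omega^n$, and that $E_p(L^{k+1}\mid L^1,\dots,L^k) = b$ almost everywhere for each $0 \leq k \leq n-1$.

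For the first point, I would simply invoke the remark immediately after \eqref{E:def tensor}, which states that the tensor product of pdfs is a pdf, and that non-degeneracy of each factor implies non-degeneracy of the product. Iterating this $n-1$ times gives that $p \in \Mborel(\Omega^n)$, which is the same thing as a non-degenerate pdf on $\Omega^n = [0,1]^{mn}$ in the notation of Section \ref{Sec:probability theory} (taking each $\Omega_{(i)} = \Omega = [0,1]^m$, so $m_i = m$).

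For the second point, this is essentially a direct application of Lemma \ref{L:tensors and conditional expectations}\eqref{L:tensors p:E XJ given XI}. With the identification $\Omega_{(i)} = \Omega$, the universal process $L^1,\dots,L^n$ on $\Omega^n$ corresponds exactly to the random vectors $X_{(1)},\dots,X_{(n)}$ of Section \ref{SS:conditional expectation}. Set $I = \{1,\dots,k\}$ and $J = \{k+1\}$; these are disjoint subsets of $\{1,\dots,n\}$. Then $X_{(I)} = (L^1,\dots,L^k)$ and $X_{(J)} = L^{k+1}$, so Lemma \ref{L:tensors and conditional expectations}\eqref{L:tensors p:E XJ given XI} yields
\[
E_p(L^{k+1}\mid L^1,\dots,L^k)(\omega^1,\dots,\omega^k) = E_{p^{k+1}}(L)
\]
for every $(\omega^1,\dots,\omega^k) \in \Omega^k$ (with the understanding that here $L \colon \Omega \to \RR^m$ is the inclusion regarded as a random vector on $\Omega_{(J)} = \Omega$). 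By hypothesis $p^{k+1} \in \Mborel(\Omega,b)$, so by definition $E_{p^{k+1}}(L) = b$. Hence the conditional expectation is the constant function $b$, which is exactly the $(L,b)$-stability condition.

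There is no real obstacle here: both claims reduce to direct citations of results already established in Section \ref{Sec:probability theory}. The only mild bookkeeping issue is to make sure that the notation of that section (with its distinction between $\Omega_{(i)}$ and the product $\Omega$) is correctly transported to the present notation where $\Omega^n$ plays the role of the total sample space and each factor is itself $[0,1]^m$; once that identification is made explicit, the proof occupies only a few lines.
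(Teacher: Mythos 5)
Your proof is correct and takes essentially the same approach as the paper: it first observes that the tensor product is a non-degenerate pdf, then applies Lemma \ref{L:tensors and conditional expectations}(\ref{L:tensors p:E XJ given XI}) with $I=\{1,\dots,k\}$ and $J=\{k+1\}$ to reduce the conditional expectation to $E_{p^{k+1}}(L)=b$.
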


\begin{proof}
It is clear that $p^1 \otimes \dots \otimes p^n$ is a non degenerate pdf on $\Omega^n$.
By Lemma \ref{L:tensors and conditional expectations}(\ref{L:tensors p:E XJ given XI}) and since by definition $E_{p^{k+1}}(L)=b$,
\[
E_{p^1 \otimes \dots \otimes p^n}(L^{k+1}|L^1,\dots,L^k)(\omega^1,\dots,\omega^k) = E_{p^{k+1}}(L^{k+1}) = E_{p^{k+1}}(L)=b.
\]
Since this holds for all $0 \leq k \leq n-1$, by definition $p^1 \otimes \dots \otimes p^n \in \Mborel(\Omega,n,b)$.
\end{proof}

\begin{lemma}[$n$-fold Approximation Lemma]
\label{L:n fold approximation lemma}
Let $b \in (0,1)^n$ and consider $q \in \C M(\Omega,b)$ with finite support $\{x^1,\dots,x^r\}$ and set $q_i=q(\{x^i\})$.
Let $q^{\otimes k}$ denote the induced product measure on $\Omega^k$.
Let $f \colon \Omega^n \to \RR$ be continuous with $f \geq 0$.
Then for any $\epsilon>0$ and any $0 \leq k \leq n$ there exists $P \in \Mborel(\Omega,n,b)$ such that 
\[
\Big| E_P(f|L^1,\dots,L^k)(\omega^1,\dots,\omega^k) - E_{q^{\otimes (n-k)}} (f(\omega^1,\dots,\omega^k,-))\Big| < \epsilon
\]
for all $\omega^1,\dots,\omega^k \in \Omega$.
\end{lemma}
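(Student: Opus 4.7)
The strategy is to construct $P$ as a product measure $P = p^1 \otimes \cdots \otimes p^n$ where each factor $p^j \in \Mborel(\Omega, b)$ is a close approximation to $q$ supplied by the one-step lemma, Lemma~\ref{L:approximation n=1}. By Lemma~\ref{L:tensors Lnb}, such a product lies in $\Mborel(\Omega, n, b)$; the first $k$ factors $p^1,\dots,p^k$ can be chosen as any fixed non-degenerate mean-$b$ measure, since they are integrated out by the conditioning. Applying Lemma~\ref{L:conditional expectation in steps} with $I = \{1,\dots,k\}$ and $J = \{k+1,\dots,n\}$, together with Lemma~\ref{L:tensors and conditional expectations}(\ref{L:tensors p:q XI given XJ}) which identifies the conditional density of $(L^{k+1},\dots,L^n)$ given $(L^1,\dots,L^k) = (\omega^1,\dots,\omega^k)$ with the product $p^{k+1} \otimes \cdots \otimes p^n$, reduces the left-hand side to
\[
E_{p^{k+1} \otimes \cdots \otimes p^n}\bigl(f(\omega^1,\dots,\omega^k,-)\bigr),
\]
which is then to be compared with $E_{q^{\otimes(n-k)}}(f(\omega^1,\dots,\omega^k,-))$.

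The main step is a telescoping estimate: write the difference as a sum of $n-k$ terms in each of which exactly one factor is swapped from $q$ to some $p^{k+j}$, and apply Fubini to reduce the $j$-th term to
\[
\int_{\Omega^{n-k-1}} \bigl(E_{p^{k+j}}(g_\tau) - E_q(g_\tau)\bigr) \, d\nu(\tau),
\]
where $\nu$ is a product of some of the $p^{k+i}$'s and copies of $q$, and $g_\tau \colon \Omega \to \RR$ is the one-variable partial function of $f(\omega^1,\dots,\omega^k,-)$ obtained by fixing the remaining coordinates at $\tau$. The explicit formula in Lemma~\ref{L:approximation n=1} gives $|E_{p^{k+j}}(g) - E_q(g)| \leq 2\beta \|g\|_\infty + \max_i |g(\xi^i) - g(x^i)|$ for any continuous $g$ on $\Omega$. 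Since $f$ is continuous on the compact cube $\Omega^n$, the family of partial functions $g_\tau$ (as $j$, $\tau$, and $(\omega^1,\dots,\omega^k)$ vary) is equibounded and equicontinuous. Selecting each $p^{k+j}$ with parameters $\beta$ and $\delta$ small enough---uniformly across $j$---forces every telescope term below $\epsilon/(n-k)$, and the desired inequality follows uniformly in $(\omega^1,\dots,\omega^k)$.

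The main obstacle is precisely this uniformity in the conditioning coordinates: a pointwise approximation is immediate, but the telescoping procedure requires the one-step approximation error to be controlled simultaneously for the entire equicontinuous family of one-variable restrictions of $f$. Compactness of $\Omega^n$ supplies the uniform modulus of continuity and uniform bound needed to make a single choice of $\beta$ and $\delta$ suffice for all terms at once.
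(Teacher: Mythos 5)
Your proposal is correct and takes essentially the same route as the paper: both build $P$ as a tensor product of one-step approximations from Lemma~\ref{L:approximation n=1}, reduce the conditional expectation via Lemmas~\ref{L:conditional expectation in steps} and~\ref{L:tensors and conditional expectations}, and control the error by swapping one factor at a time from the approximation to $q$, using compactness of $\Omega^n$ for a uniform modulus of continuity and bound. The paper packages the swap as a downward induction on $k$ and uses the same factor throughout ($P = p^{\otimes n}$), whereas you phrase it as an explicit telescoping sum with possibly distinct factors $p^{k+j}$, but the two are the same argument.
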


\begin{proof}
Since $\Omega^n$ is compact, $f$ is bounded, i.e $\|f\|_\infty<\infty$.
Since $f$ is uniformly continuous, we choose $\delta>0$ suitable for $\tfrac{\epsilon}{3n}$.
Apply Lemma \ref{L:approximation n=1} with $\delta$ and with $\tfrac{\epsilon}{3n\|f\|_\infty}$ to obtain $p \in \Mborel(\Omega,b)$ and $\beta<\tfrac{\epsilon}{3n\|f\|_\infty}$ such that for any continuous $g \colon \Omega \to \RR$ where $g \geq 0$,
\begin{equation}\label{E:n-fold approximation:p}
E_p(g) = \beta \int_\Omega g \, d\mu + \sum_{i=1}^r (q_i-\tfrac{\beta}{r}) g(\xi^i)
\end{equation}
for some $\xi^1,\dots,\xi^r \in \Omega$ such that $\|x^i-\xi^i\|_\infty<\delta$.

Set $[r]=\{1,\dots,r\}$.
For any $I=(i_1,\dots,i_{n-k}) \in [r]^{n-k}$ set 
\[
q_I=q_{i_1} \cdots q_{i_{n-k}} \qquad \text{and} \qquad x^I=(x^{i_1},\dots,x^{i_{n-k}}) \in \Omega^{n-k}
\] 
and let $f_I \colon \Omega^{k} \to \RR$ be the function 
\[
f_I \colon (\omega^1,\dots,\omega^{k}) \mapsto f(\omega^1,\dots,\omega^{k},x^I).
\]
Observe that 
\begin{equation}\label{E:n-fold apprx:Eq tensor}
E_{q^{\otimes(n-k)}}(f(\omega^1,\dots,\omega^k,-)) = \sum_{I \in [r]^{n-k}} q_I \cdot f_I(\omega^1,\dots,\omega^k).
\end{equation}
It is clear that $\|f_I\|_\infty \leq \|f\|_\infty $ and that $f_I$ is uniformly continuous with the same $\delta$ suitable for $\tfrac{\epsilon}{3n}$ as that for $f$.
Recall $p \in \Mborel(\Omega,b)$ that we chose at the start of the proof.

\noindent
{\em Claim:} Consider some $0 \leq k <n$ and some $I \in [r]^{n-k-1}$.
Then for any $\omega^1,\dots,\omega^k \in \Omega$
\[
\left| E_p\left(\omega \mapsto f_I(\omega^1,\dots,\omega^k,\omega)\right) - \sum_{i=1}^r q_i \cdot f_I(\omega^1,\dots,\omega^k,x^i)\right|<\tfrac{\epsilon}{n}.
\]
\noindent
{\em Proof:} Set $g(w)=f_I(\omega^1,\dots,\omega^k,\omega)$.
Clearly $g \colon \Omega \to \RR$ is continuous and $\|g\|_\infty \leq \|f\|_\infty$.
Moreover, it is uniformly continuous and clearly the same $\delta$ we chose for $f$ suitable for $\tfrac{\epsilon}{3n}$ works for $g$.
Since $E_{q}(g) = \sum_{i=1}^r q_i g(x^i)$ and since \eqref{E:n-fold approximation:p} holds 
\[
|E_p(g)-E_q(g)| \leq
\beta \int_\Omega g\, d\mu +\tfrac{\beta}{r}\sum_{i=1}^r |g(\xi^i)| +\sum_{i=1}^r q_i |g(\xi^i)-g(x^i)|.
\] 
Since $\|g\|_\infty\leq \|f\|_\infty$ and $\beta<\tfrac{\epsilon}{3n\|f\|_\infty}$ the first and second terms in this sum are less than $\tfrac{\epsilon}{3n}$.
Since $\|\xi^i-x^i\|_\infty<\delta$, the uniform continuity of $g$ implies that the same is true for the last term since $\sum_i q_i=1$.
This completes the proof of the claim.
\hfill q.e.d

Set $P=p^{\otimes n}$.
Then $P \in \Mborel(\Omega,n,b)$ by Lemma \ref{L:tensors Lnb}.
In light of \eqref{E:n-fold apprx:Eq tensor}, we complete the proof of the lemma by showing by downward induction on $0 \leq k \leq n$ that
\begin{equation}\label{L:n fold approximation:induction}
\left| E_{p^{\otimes n}}(f|L^1,\dots,L^k)(\omega^1,\dots,\omega^k) - \sum_{I \in [r]^{n-k}} q_I \cdot f_I(\omega^1,\dots,\omega^k)\right| \leq (n-k)\tfrac{\epsilon}{n}.
\end{equation}
The base of induction $k=n$ is a triviality since
\[
E_{p^{\otimes n}}(f|L^1,\dots,L^n)(\omega^1,\dots,\omega^n) = f(\omega^1,\dots,\omega^n)
\]
and since $f_I=f$ and $q_I=1$ for the the only $I \in [r]^0$. 

Assume inductively that \eqref{L:n fold approximation:induction} holds for some $1 \leq k \leq n$.
Fix some $\omega^1,\dots,\omega^{k-1} \in \Omega$.
Lemmas \ref{L:conditional expectation in steps} and \ref{L:tensors and conditional expectations}(\ref{L:tensors p:q XI given XJ})  imply that for any 
\[
E_p(\tau \mapsto E_{p^{\otimes n}}(f|L^1,\dots,L^{k})(\omega^1,\dots,\omega^{k-1},\tau)) =
E_{p^{\otimes n}}(f|L^1,\dots,L^{k-1})(\omega^1,\dots,\omega^{k-1}).
\]
Viewing the left hand side of \eqref{L:n fold approximation:induction} with $\omega^1, \dots,\omega^{k-1}$ fixed as a function of $\tau \in \Omega$, the linearity of expectation $E_p(-)$ implies 
\[
\left|E_{p^{\otimes n}}(f|L^1,\dots,L^{k-1})(\omega^1,\dots \omega^{k-1}) - \sum_{I \in [r]^{n-k}} q_I E_p(f_I(\omega^1,\dots,\omega^{k-1},-))\right|< (n-k)\tfrac{\epsilon}{n}.
\]
Thanks to \eqref{E:n-fold apprx:Eq tensor}, in order to complete the induction step (to $k-1$) it remains to show by  that
\[
\left| \sum_{I \in [r]^{n-k}} q_I \cdot E_p(f_I(\omega^1,\dots,\omega^{k-1},-)) - \sum_{J \in [r]^{n-k+1}} q_J \cdot  f_J(\omega^1,\dots,\omega^{k-1}) \right|< \tfrac{\epsilon}{n}.
\]
Given $J=(i_1,\dots,i_{n-k+1}) \in [r]^{n-k+1}$ set $I=(i_2,\dots,i_{n-k+1}) \in [r]^{n-k}$ and observe that $q_J=q_I q_{i_1}$ and that $f_J(\omega^1,\dots,\omega^{k-1})=f_I(\omega^1,\dots,\omega^{k-1},x^{i_1})$.
By the Claim above
\begin{align*}
\Big| \sum_{I \in [r]^{n-k}} & q_I E_p(f_I(\omega^1,\dots,\omega^{k-1},-)) - \sum_{J \in [r]^{n-k+1}} q_J f_J(\omega^1,\dots,\omega^{k-1}) \Big| = 
\\
& =
\Big| \sum_{I \in [r]^{n-k}} q_I E_p(f_I(\omega^1,\dots,\omega^{k-1},-)) - \sum_{I \in [r]^{n-k}} \sum_{i=1}^r q_I q_i f_I(\omega^1,\dots,\omega^{k-1},x^i) \Big| 
\\
& \leq
\sum_{I \in [r]^{n-k}} q_I \cdot \Big| E_p(f_I(\omega^1,\dots,\omega^{k-1},-)) - \sum_{i=1}^r q_i f_I(\omega^1,\dots,\omega^{k-1},x^i) \Big| 
\\
& <
\sum_{I \in [r]^{n-k-1}} q_I \cdot \tfrac{\epsilon}{n} = \tfrac{\epsilon}{n}.
\end{align*}
This completes the induction step.
\end{proof}

\begin{proof}[Proof of Theorem \ref{T:infimum Jensen}]
For any $J=(j_1,\dots,j_n) \in \C P_n(m)$ and any $\omega^1, \dots, \widehat{\omega^k},\dots, \omega^n \in \Omega$ (meaning $\omega^k$ is omitted) we have
\[
\ell_J(\omega^1,\dots,\omega^{k-1},-,\omega^{k+1},\dots,\omega^n) = \prod_{i \neq k} \ell_{j_i}(\omega^i) \cdot \ell_{j_k}(-).
\]
Therefore, if $\lambda \in \Mborel(\Omega,b)$ we get
\begin{multline*}
E_\lambda( \ell_J(\omega^1,\dots,\omega^{k-1},-,\omega^{k+1},\dots,\omega^n)) = 
 E_\lambda(\ell_{j_k}) \cdot \prod_{i \neq k} \ell_{j_i}(\omega^i) =
b_{j_k} \cdot \prod_{i \neq k} \ell_{j_i}(\omega^i) =
\\
\ell_{j_k}(b) \cdot \prod_{i \neq k} \ell_{j_i}(\omega^i) =
\ell_J(\omega^1,\dots,\omega^{k-1},b,\omega^{k+1},\dots,\omega^n).
\end{multline*}

We use downward induction on $0 \leq k \leq n$ to show that for any $p \in \Mborel(\Omega,n,b)$
\[
E_p(f|L^1,\dots,L^k)(\omega^1,\dots,\omega^k) \geq f(\omega^1,\dots,\omega^k,b,\dots,b)
\]
almost everywhere.
The base of induction $k=n$ is a triviality (and in fact, equality holds a.e).
Assume the inequality holds for $k+1 \leq n$. 
Set $p'=p_{L^{k+1}|L^1=\omega^1,\dots,L^k=\omega^k}$.
Then $p' \in \Mborel(\Omega,b)$ by Lemma \ref{L:Mborel n fact 1}.
Lemma \ref{L:conditional expectation in steps} and the induction hypothesis imply
\begin{align*}
E_p(f|L^1,\dots,L^k)(\omega^1,\dots,\omega^k) & =
E_{p'}(\omega \mapsto E_p(f|L^1,\dots,L^{k+1})(\omega^1,\dots,\omega^k,\omega)) 
\\
& \geq 
E_{p'}(\omega \mapsto f(\omega^1,\dots,\omega^k,\omega,b,\dots,b).
\end{align*}
Since $f=h\circ g$ with $h$ convex and $g$ as in Example \ref{Ex:fibrewise convex supermodular}, Jensen's inequality allows us to continue the inequality
\begin{align*}
& \geq 
h(\sum_{J} a_J E_{p'}(\ell_J(\omega^1,\dots,\omega^{k},-,b,\dots,b)) 
\\
& =
h(\sum_J a_J \ell_J(\omega^1,\dots,\omega^k,b,\dots,b)) 
\\
& =
h(g(\omega^1,\dots,\omega^k,b,\dots,b)) 
\\
& = 
f(\omega^1,\dots,\omega^k,b,\dots,b).
\end{align*}
This completes the induction step.

We deduce that in the statement of the theorem the right hand side is a lower bound for the left hand side and it remains to show equality.
Let $\nu$ be the probability measure on $\Omega$ supported on $\{b\}$, i.e $\nu(\{b\})=1$.
It is clear that for any measurable function $g \colon \Omega^k \to \RR$ we have $E_{\nu^{\otimes k}}(g)=g(b,\dots,b)$.
By Lemma \ref{L:n fold approximation lemma}, for any $\epsilon>0$ there exists $P \in \Mborel(\Omega,n,b)$ such that $E_P(f|L^1,\dots,L^k)(\omega^1,\dots,\omega^k)$ is $\epsilon$-close to $E_{\nu^{\otimes (n-k)}}(f(\omega^1,\dots,\omega^k,-))=f(\omega^1,\dots,\omega^k,b,\dots,b)$.
This completes the proof.
\end{proof}

\begin{proof}[Proof of Theorem \ref{T:supremum Omega n}]
First, observe that 
\begin{equation}\label{E:supremum Omega n:Eq*tensor}
E_{{q^*}^{\otimes (n-k)}}(f(\omega^1,\dots,\omega^k,-)) = \sum_{I \in \C P_{n-k}(m)} q_I \cdot f(\omega^1,\dots,\omega^k,\rho_I).
\end{equation}
Next, we prove that for any $p \in \Mborel(\Omega,n,b)$ and any $0 \leq k \leq n$
\begin{equation}\label{E:supremum:induction}
E_p(f| L^1,\dots,L^k)(\omega^1,\dots,\omega^k)  \leq E_{(q^*)^{\otimes n-k}}(f(\omega^1,\dots,\omega^k,-)).
\end{equation}
Fix some $p$ and use downward induction on $k$.
The base of induction $k=n$ is a triviality since $E_p(f|L^1,\dots,L^n)=f$ a.e.
Assume inductively that \eqref{E:supremum:induction} holds for $k+1 \leq n$.
Set $p'=p_{L^{k+1}|(L^1,\dots,L^k)=(\omega^1,\dots,\omega^k)}$.
Lemma \ref{L:conditional expectation in steps} and the induction hypothesis together with \eqref{E:supremum Omega n:Eq*tensor} imply that
\begin{align*}
E_p(f|L^1,\dots,L^k)(\omega^1,\dots,\omega^k) &= 
E_{p'}(\omega \mapsto E_p(f|L^1,\dots,L^{k+1})(\omega^1,\dots,\omega^k,\omega)) 
\\
& \leq
E_{p'}(\omega \mapsto \sum_{I \in \C P_{n-k-1}}  q_I f(\omega^1,\dots,\omega^k,\omega,\rho_I)) 
\\
& =
\sum_{I \in \C P_{n-k-1}(m)} q_I \cdot E_{p'} (\omega \mapsto f(\omega^1,\dots,\omega^k,\omega,\rho_I)).
\end{align*}
By the assumption on $f$, each function $f(\omega^1,\dots,\omega^k,-,\rho_I)$ is convex-supermodular and continuous and.
Lemma \ref{L:Mborel n fact 1} and Theorem \ref{T:one step max} allow us to continue the estimate
\[ 
\leq 
\sum_{I \in \C P_{n-k-1}(m)} q_I \cdot  \sum_{j=0}^m q_j \cdot f(\omega^1,\dots,\omega^k,\rho_j,\rho_I)) =
\sum_{I \in \C P_{n-k}(m)} q_I \cdot f(\omega^1,\dots,\omega^k,\rho_I).
\] 
Together with \eqref{E:supremum Omega n:Eq*tensor}, this completes the induction step.

We deduce that for any $0 \leq k \leq n$ the right hand side in the statement of the theorem is an upper bound for the left hand side. 
By Lemma \ref{L:n fold approximation lemma} there exist $P \in \Mborel(\Omega,n,b)$ such that $E_P(f|L^1,\dots,L^k)(\omega^1,\dots,\omega^k)$ are arbitrarily close to $E_{{q^*}^{\otimes (n-k)}}(f(\omega^1,\dots,\omega^k,-)$.
This completes the proof.
\end{proof}

\section{Proofs of the main results}

In this section we prove the results in Section \ref{Sec:main results}.
We start by setting up a formal framework for the discrete-time continuous-binomial market model presented there.

We begin with the ``one-step'' process, namely description of the price jumps $\Psi_i$ where $0 \leq i \leq m$.
By definition $\Psi_0=R$ and $\Psi_i$ are chosen at random from the interval $[D_i,U_i]$.
By choosing a linear homeomorphisms $[D_i,U_i] \cong [0,1]$, a natural sample space for the probability space underlying a single step is $\Omega=[0,1]^m$ and
\begin{eqnarray*}
&& \Psi_i(x_1,\dots,x_m) = D_i + (U_i-D_i)x_i, \\
&& \Psi_0(x_1,\dots,x_m) = R
\end{eqnarray*}
With the notation of Section \ref{SEC:Omega}, for any $1 \leq i \leq m$
\[
\Psi_i=D_i \ell_0 + (U_i-D_i) \ell_i.
\]
We will write $\Psi \colon \Omega \to \RR^{m+1}$ for the random vector 
\[
\Psi = (\Psi_0,\dots,\Psi_m).
\]

The natural sample space for the $n$-step model is $\Omega^n$.
We obtain a process $\Psi^1,\dots,\Psi^n$ of the price changes at time $k$:
\[
\Psi^k \colon \Omega^n \xto{L^k} \Omega \xto{\Psi} \RR^{m+1}
\]
where $L^k$ is the projection to the $k$-th factor and $L^1,\dots,L^k$ form the universal process on $\Omega^n$, see Section \ref{SS:processes}.
Thus,
\[
\Psi_i^k=D_i+(U_i-D_i)L_i^k
\]
where $L^k_i$ is the $i$th component of $L^k \colon \Omega^n \to \Omega \subseteq \RR^{m+1}$ and we observe that (since $\ell_0=\mathbf{1} \colon \Omega \to \RR$)
\[
L^k_i = \ell_0^{\otimes (k-1)} \otimes \ell_i \otimes \ell_0^{\otimes (n-k-1)}.
\]
Recall that we assume that $0<D_i<R<U_i$ so in particular $\Psi_i^k>0$ for all $i$ and all $k$.

The prices of the assets form an  $\RR^{m+1}$-valued process
\[
S^0,\dots,S^n \colon \Omega^n \to \RR^{m+1}
\]
where $S^k=(S_0^k,\dots,S_m^k)$ is the vector of prices of the assets at time $k$.
It is assumed by the model that 
\[
S_i^k >0 \qquad \text{for all $0 \leq i \leq m$ and $0 \leq k \leq n$.}
\]
By construction of the model, the processes $S^0,\dots,S^n$ and $\Psi^1,\dots,\Psi^n$ satisfy the relation
\[
S_i^k = S_i^{k-1} \cdot \Psi_i^k \qquad \text{($0 \leq i \leq m$ and $1 \leq k \leq n$).}
\]
It is therefore clear that for any $0 \leq k \leq n$
\[
S_i^k = S_i^0 \cdot \Psi_i^1 \cdots \Psi_i^k.
\]
Recall the  definition of $\C P_k(m)$ from \eqref{E:def Pkm} in Section \ref{Sec:main results}.
For $J=(j_1,\dots,j_k) \in \C P_k(m)$ set $\ell_J=\ell_{j_1} \otimes \cdots \otimes \ell_{j_k}$.
It follows that
\[
S_i^k = \sum_{J \in \C P_k(m)} a_J \cdot L^1_{j_1} \cdots L^k_{j_k} =
\sum_{J \in \C P_k(m)} a_J \cdot \ell_J \otimes \underbrace{\mathbf{1} \otimes \cdots \otimes \mathbf{1}}_{\text{$n-k$ times}}
\]
for some $a_J \geq 0$. 

\noindent
{\bf Comment:} In Section \ref{Sec:main results} the processes $\Psi^k$ and $S^k$ were denoted $\Psi(k)$ and $S(k)$.

The European basket in Section \ref{Sec:main results} is the function (random variable) $F \colon \Omega^n \to \RR$
\[
F=\big(\underbrace{\sum_{i=0}^m c_i \cdot S_i^n -K}_G\big)^+
\]
where $c_i \geq 0$ for $1 \leq i \leq m$ and $K>0$ is some number.
Notice that 
\[
G= \sum_{J \in \C P_n(m)} a_J \ell_J
\]
where $a_J \geq 0$ for all $J \neq (0,\dots,0)$.
It follows from Example \ref{Ex:fibrewise convex supermodular} and since $h(x)=x^+$ is continuous, convex and non-negative that
\begin{prop}\label{P:European basket is fibrewise convex-supermodular}
$F$ is continuous and fibrewise convex-supermodular and $F \geq 0$.
\end{prop}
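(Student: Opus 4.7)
The proposition is essentially a direct consequence of Example \ref{Ex:fibrewise convex supermodular} once one identifies the pieces, so the plan is short and checks a few routine items.

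First I would note that $F \geq 0$ is immediate from the definition $x^+ = \max\{x,0\} \geq 0$, and continuity of $F$ follows from the fact that $G$ is a polynomial in the coordinate functions $\ell_J$ (hence continuous on $\Omega^n$) and $h(x) = x^+$ is continuous on $\mathbb{R}$, so their composition $F = h \circ G$ is continuous.

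The main content is to verify fibrewise convex-supermodularity. The plan is to apply Example \ref{Ex:fibrewise convex supermodular} directly with $h(x) = x^+$ (which is continuous, convex and non-negative) and $g = G$. To invoke the example I need to confirm the expansion $G = \sum_{J \in \C P_n(m)} a_J \ell_J$ with $a_J \geq 0$ whenever $J \neq (0,\dots,0)$. This is supplied by the computation just before the statement: each $S_i^n = S_i^0 \cdot \Psi_i^1 \cdots \Psi_i^n$ expands, using $\Psi_i^k = D_i \ell_0^{(k)} + (U_i - D_i)\ell_i^{(k)}$ and the fact that $\ell_0 \equiv 1$, into a sum $\sum_J a_{i,J} \ell_J$ with non-negative coefficients (since $D_i > 0$, $U_i - D_i > 0$, $S_i^0 > 0$). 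Multiplying by $c_i \geq 0$ (for $i \geq 1$) and summing preserves non-negativity of all coefficients; the coefficient $c_0$ only affects the term indexed by $(0,\dots,0)$, together with the constant $-K$. Hence $G$ has the required form.

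The one small subtlety is that the assumption in Example \ref{Ex:fibrewise convex supermodular} only constrains $a_J$ for $J \neq (0,\dots,0)$, which is exactly what one needs because the $(0,\dots,0)$ coefficient is the constant term on each fibre and plays no role in the fibrewise convexity/modularity analysis in that example. I do not anticipate any real obstacle: once the expansion of $G$ is written down, Example \ref{Ex:fibrewise convex supermodular} does all the work, and $F = h \circ G$ is fibrewise convex-supermodular on each fibre by restricting the same expansion.
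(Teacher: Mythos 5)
Your proposal is correct and follows exactly the paper's route: expand $G=\sum_J a_J\ell_J$ with $a_J\geq 0$ for $J\neq(0,\dots,0)$ and invoke Example \ref{Ex:fibrewise convex supermodular} with $h(x)=x^+$. The only difference is that you spell out the expansion of $S_i^n$ and the role of $c_0$ and $-K$, which the paper leaves implicit.
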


Recall that a non-degenerate probability measure $p$ on $\Omega^n$ is called {\em risk neutral} if for any $k \geq 0$ and any $j \geq 1$ such that $k+j \leq n$
\[
E_p(S^{k+j}|L^1,\dots,L^k)(\omega^1,\dots,\omega^k) = R^j \cdot S^k(\omega^1,\dots,\omega^k).
\]
We denote the set of these probability measures by $\OP{RN}$.

\begin{prop}\label{P:RN=Mborel(Omega,n,b)}
$\OP{RN}=\Mborel(\Omega,n,b)$ where $b=(b_1,\dots,b_m)$ is defined in \eqref{Ebi from DU} in Section \ref{Sec:main results}.
\end{prop}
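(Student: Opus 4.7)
The plan is to pass between the risk-neutrality condition on the price process $S^k$ and the $(L,b)$-stability condition on the underlying universal process $L^k$, using the explicit affine relationship $\Psi_i^{k+1}=D_i+(U_i-D_i)L_i^{k+1}$ together with the multiplicative recursion $S_i^{k+1}=S_i^k\cdot \Psi_i^{k+1}$ and the assumption $S_i^k>0$ built into the model.

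First I would prove the inclusion $\OP{RN}\subseteq \Mborel(\Omega,n,b)$. Take $p\in\OP{RN}$ and specialise the defining identity to $j=1$, obtaining
\[
E_p(S^{k+1}|L^1,\dots,L^k)(\omega^1,\dots,\omega^k)=R\cdot S^k(\omega^1,\dots,\omega^k)
\]
a.e. Since $S_i^{k+1}=S_i^k\cdot \Psi_i^{k+1}$ and $S_i^k$ depends only on $L^1,\dots,L^k$, Lemma \ref{L:conditional expectation of truncated functions} lets me pull $S_i^k$ out of the conditional expectation, and positivity of $S_i^k$ lets me divide, giving $E_p(\Psi_i^{k+1}|L^1,\dots,L^k)=R$ a.e. for every $1\leq i\leq m$ (the $i=0$ case is trivial as $\Psi_0=R$). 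Substituting $\Psi_i^{k+1}=D_i+(U_i-D_i)L_i^{k+1}$ and solving gives
\[
E_p(L_i^{k+1}|L^1,\dots,L^k)=\frac{R-D_i}{U_i-D_i}=b_i
\]
a.e., which is exactly the $(L,b)$-stability condition.

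For the reverse inclusion I would argue by induction on $j$. The base case $j=1$ is obtained by running the previous computation backwards: given $E_p(L^{k+1}|L^1,\dots,L^k)=b$ a.e., I reconstruct $E_p(\Psi_i^{k+1}|L^1,\dots,L^k)=R$ componentwise, then multiply back by $S_i^k$ (using Lemma \ref{L:conditional expectation of truncated functions} as before) to conclude $E_p(S^{k+1}|L^1,\dots,L^k)=R\cdot S^k$ a.e. For the induction step I use the tower property for conditional expectations, which is the content of Lemma \ref{L:conditional expectation in steps} applied with $I=\{1,\dots,k\}$ and $J=\{k+1,\dots,k+j\}$: assuming $E_p(S^{k+j}|L^1,\dots,L^k)=R^j S^k$ a.e., I write
\[
E_p(S^{k+j+1}|L^1,\dots,L^k)=E_p\bigl(E_p(S^{k+j+1}|L^1,\dots,L^{k+j})\,\big|\,L^1,\dots,L^k\bigr)
\]
and apply the base case to the inner expectation and the induction hypothesis to the outer one.

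The routine part is the algebraic manipulation translating between $\Psi$ and $L$; the only step that needs genuine care is the tower step, where I must be sure that the version of the tower identity I am citing (Lemma \ref{L:conditional expectation in steps}) applies to $S^{k+j+1}$ viewed as a non-negative random vector on $\Omega^n$ and that $S^{k+j}$, being a function of $L^1,\dots,L^{k+j}$ with positive values, can legitimately be pulled out of the conditional expectation via Lemma \ref{L:conditional expectation of truncated functions}. Once both directions are established, the equality $\OP{RN}=\Mborel(\Omega,n,b)$ follows immediately.
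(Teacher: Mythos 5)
Your proposal is correct and follows essentially the same route as the paper's proof: both reduce the risk-neutrality condition to the single-step identity $E_p(\Psi_i^{k+1}\mid L^1,\dots,L^k)=R$ by pulling $S_i^k$ out of the conditional expectation (via Lemma \ref{L:conditional expectation of truncated functions}) and handling the multi-step case inductively with the tower identity of Lemma \ref{L:conditional expectation in steps}, then translate to the condition $E_p(L_i^{k+1}\mid L^1,\dots,L^k)=b_i$ through the affine relation $\Psi_i^{k+1}=D_i+(U_i-D_i)L_i^{k+1}$. The only difference is one of exposition: you spell out the $j$-induction for the reverse inclusion in more detail where the paper compresses both directions into a single claimed equivalence "easily verified using induction and Lemmas \ref{L:conditional expectation of truncated functions} and \ref{L:conditional expectation in steps}."
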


\begin{proof}
Since $S_i^{k+j}=S_i^k \cdot \Psi_i^{k+1} \cdots \Psi_i^{k+j}$ and since $S_i^k>0$, it is clear that the condition for $p$ being a risk neutral measure is equivalent to the condition
\[
E_p(\Psi_i^{k+1}\cdots\Psi_i^{k+j}|L^1,\dots,L^k) = R^j
\]
(almost everywhere constant function $\Omega^{n-k} \to \RR$).
It is easily verified using induction and Lemmas \ref{L:conditional expectation of truncated functions} and \ref{L:conditional expectation in steps} that this condition (for any $k,j$ such that $k+j \leq n$) is equivalent to the single step condition, namely
\[
E_p(\Psi_i^{k+1}|L^1,\dots,L^k) = R
\]
for all $1 \leq k \leq n-1$.
But $\Psi_i^{k+1} = D_i + (U_i-D_i) \cdot L_i^{k+1}$.
So the condition above is equivalent to 
\[
D_i  + (U_i-D_i) \cdot E_p(L_i^{k+1} | L^1,\dots,L^k) =R.
\]
Using the definition of $b_1,\dots,b_m$ in \eqref{Ebi from DU}, this is equivalent to  $E_p(L_i^{k+1} | L^1,\dots,L^k) = b_i$, and collecting these for all $1 \leq i \leq m$ we get 
\[
E_p(L^{k+1} | L^1,\dots,L^k)=b
\]
which by definition is the condition for $p \in \Mborel(\Omega,n,b)$.
\end{proof}

\begin{proof}[Proof of Theorem \ref{T:rational values interval ends}]
By Proposition \ref{P:European basket is fibrewise convex-supermodular} $F$ is continuous convex-supermodular and $F \geq 0$.
The interval $(\Gamma_{\min}(F,k)\, , \, \Gamma_{\max}(F,k))$ of the rational values of $F$ at some state of the world $(\omega^1,\dots,\omega^k) \in \Omega^k$ is known to be the collection of numbers 
\[
\{ \ R^{k-n} \cdot E_p(F|L^1,\dots,L^k)(\omega^1,\dots,\omega^k) \ \}_{p \in \OP{RN}}.
\] 
Proposition \ref{P:RN=Mborel(Omega,n,b)} and Theorem \ref{T:infimum Jensen} imply that
\begin{multline*}
R^{n-k} \cdot \Gamma_{\min}(F,k)(\omega^1,\dots,\omega^k) = 
\inf_{p \in \OP{RN}} E_p(F|L^1,\dots,L^k)(\omega^1,\dots,\omega^k) =
\\
\inf_{p \in \Mborel(\Omega,n,b)} E_p(F|L^1,\dots,L^k)(\omega^1,\dots,\omega^k) =
F(\omega^1,\dots,\omega^k,b,\dots,b).
\end{multline*}
By definition of $b$, see \eqref{Ebi from DU}, and since $\Psi_i=D_i\ell_0+(U_i-D_i)\ell_i$,
\[
\Psi_i(b)=R \qquad \text{ for all $1 \leq i \leq m$.}
\]
By definition, for any $1 \leq i \leq m$
\[
S_i^n(\omega^1,\dots,\omega^k,b,\dots,b) = 
S_i^0 \cdot \Psi_i(\omega^1) \cdots \psi_i(\omega^k) \cdot \underbrace{\Psi_i(b) \cdots \Psi_i(b)}_{\text{$n-k$ times}} =
R^{n-k} S_i^k(\omega^1,\dots,\omega^k).
\]
Also, for $i=0$ we clearly get $S_0^n=S_0^0 \cdot R^n=R^{n-k} S_0^k$.
It follows that 
\[
F(\omega^1,\dots,\omega^k,b,\dots,b) =
\left( R^{n-k} \sum_{i=0}^m c_i S_i^k -K\right)^+(\omega^1,\dots,\omega^k).
\]
This establishes the formula for $\Gamma_{\min}(F,k)$.

We note that the numbers $q_i$ defined in \eqref{E:def qi for main theorem} and used in the statement of the theorem are equal to $q^*(\rho_i)$ of the upper supermodular vertex (Definition \ref{D:upper supermodular vertex}).
Since by Proposition \ref{P:European basket is fibrewise convex-supermodular} the conditions of Theorem \ref{T:supremum Omega n} hold, it follows that 
\begin{align*}
R^{n-k} \cdot\Gamma_{\max}(F,k)(\omega^1,\dots,\omega^k) & = 
\sup_{p \in \OP{RN}} E_p(F|L^1,\dots,L^k)(\omega^1,\dots,\omega^k) 
\\
& =
\sup_{p \in \Mborel(\Omega,n,b)} E_p(F|L^1,\dots,L^k)(\omega^1,\dots,\omega^k) 
\\
& =
\sum_{J \in \C P_{n-k}(m)} q_J \cdot F(\omega^1,\dots,\omega^k,\rho_{j_1}, \dots, \rho_{j_{n-k}}).
\end{align*}
Since $\ell_i(\rho_j)=1$ if $i \leq j$ and $\ell_i(\rho_j)=0$ if $i>j$ it follows that $\Psi_i(\rho_j)=D_i+(U_i-D_i)\ell_i(\rho_j)=\chi_i(j)$.
Therefore, for any $J \in \C P_{n-k}(m)$,
\begin{multline*}
S_i^n(\omega^1,\dots,\omega^k,\rho_{j_1}, \dots, \rho_{j_{n-k}}) =
S_i^0 \cdot \Psi_i(\omega^1) \cdots \Psi_i(\omega^k) \cdot \Psi_i(\rho_{j_1}) \cdots \Psi_i(\rho_{j_{n-k}}) =
\\
\chi_i(J) \cdot S_i^k(\omega^1,\dots,\omega^k).
\end{multline*}
For $i=0$ we get, of course, $S_0^n=S_0^0 \cdot R^n=S_0^k \cdot \chi_0(J)$.
Substitution into the definition of $F$ we get
\[
F(\omega^1,\dots,\omega^k,\rho_{j_1}, \dots, \rho_{j_{n-k}}) =
\left(\sum_{i=0}^m c_i \cdot \chi_i(J) \cdot S_i^k -K\right)^+ (\omega^1,\dots,\omega^k).
\]
This establishes the formula for $\Gamma_{\max}(F,k)$.
\end{proof}

\begin{proof}[Proof of Theorem \ref{T:shrinking ends}]
Recall that $\Psi_i(b)=D_i\ell_0-(U_i-D_i)\ell_i(b)=D_i+(U_i-D_i)b_i=R$ for $1 \leq i \leq m$ and that $\Psi_0=R$ by definition.
By Theorem \ref{T:rational values interval ends}
\[
\Gamma_{\min}(F,0) = 
R^{-n} \cdot F(b,\dots,b) =
R^{-n}(\sum_{i=0}^m c_i S_i^0 \cdot \Psi_i(b)^n - K)^+ = 
R^{-n} \cdot \left(R^n \sum_{i=0}^m c_i S_i^0 - K\right)^+.
\]
This is independent of $U_i, D_i$.

Set $b_i$ as in \eqref{Ebi from DU} and denote by $b_i(s)$ the values of $b_i$ in our market model with parameters $u_i(s)$ and $d_i(s)$.
Notice that $d_i(s)<R$ and that $u_i(s)>R$ for all $0 \leq s <1$.
Moreover, $d_i(0)=D_i$ and $u_i(0)=U_i$ and $\lim_{s \nearrow 1} d_i(s) =R$ and $\lim_{s \nearrow 1} u_i(s) = R$.
One checks that 
\[
b_i(s)=\frac{R-d_i(s)}{u_i(s)-d_i(s)} = b_i.
\]
Therefore the values of $q_i=b_i-b_{i+1}$ are independent of $s$.
The values of $\chi_i(j)$ do depend on $s$ where $\chi_i(j)(s)=u_i(s)$ if $i \leq j$ and $\chi_i(j)(s)=d_i(s)$ if $i> j$.
Thus, $\chi_i(j)(s)$ is a polynomial (of degree $1$) in $s$.
Moreover, $\lim_{s \nearrow} \chi_i(j)(s)=R$.
By Theorem \ref{T:rational values interval ends}
\[
\varphi(s) = 
\Gamma_{\max}(F,0;u_i(s),d_i(s)) = 
R^{-n}\sum_{J \in \C P_n(m)} q_J \cdot \left(\sum_{i=0}^m c_i \cdot S_i^0 \cdot \prod_{j \in J} \chi_i(j)(s)\right)^+.
\]
So $\vp$ is a continuous function of $s \in [0,1)$.
Now, $\vp(0)=\Gamma_{\max}(F,0;U_i,D_i)$ since $d_i(0)=D_i$ and $u_i(0)=U_i$.
Since $h \colon x \mapsto x^+$ is continuous and $\sum_{j=0}^m q_j=1$ we get
\begin{align*}
\lim_{s \nearrow 1} \varphi(s) &= 
R^{-n} \sum_{J \in \C P_n(m)} q_J \left(\sum_{i=0}^m c_i S_i^0\cdot \lim_{s \nearrow 1}\prod_{j \in J} \chi_i(j)(s)-K\right)^+ 
\\
& =
R^{-n}\sum_{J \in \C P_n(m)} q_J \left(\sum_{i=0}^m c_i S_i^0 \cdot R^n-K\right)^+ 
\\
& =
R^{-n}\left(\sum_{i=0}^m c_i S_i^0 \cdot R^n-K\right)^+ 
\\
&=
\Gamma_{\min}(F,0;U_i,D_i).
\end{align*}
The ``intermediate value'' result in the theorem  follows from the continuity of $\varphi(s)$.
\end{proof}

In the next lemma we will consider processes on $\Omega$, see Section \ref{SS:processes}.
Recall that the universal process is $L^1,\dots,L^n$ where $L^k$ is the projection $\Omega^n \to \Omega$ to the $k$-th factor followed by the inclusion to $\RR^m$.

\begin{lemma}\label{L:linear programming for hedging}
Fix some $n$.
Let $q$ be a probability measure on $\Omega$ supported on $\{\rho_0,\dots,\rho_m\}$, see \eqref{E:define rho_k}.
Consider $\RR$-valued processes
\[
r^0,\dots,r^{n-1} >0 \qquad \text{and} \qquad X^0,\dots,X^n
\]
and $\RR^m$-valued processes
\[
d^0,\dots,d^{n-1} \qquad \text{and} \qquad
\Delta^0,\dots,\Delta^{n-1} > 0.
\]
Further, consider an $\RR^{m+1}$-valued processes
\[
S^0,\dots,S^n \qquad \text{and} \qquad \Psi^1,\dots,\Psi^n.
\]
with components $S^k=(S_0^k,\dots,S_m^k)$ and $\Psi^k=(\Psi_0^k,\dots,\Psi_m^k)$.
Assume that
\begin{enumerate}
\item
$S_i^k=S_i^{k-1} \cdot \Psi_i^k$ for all $1 \leq k \leq n$ and all $0 \leq i \leq m$.

\item
$\Psi_0^{k+1}=r^k$ and $\Psi_i^{k+1}=d_i^k+\Delta_i^k \cdot L_i^{k+1}$ for any $0 \leq k \leq n-1$, and for any $\omega^K=(\omega^1,\dots,\omega^k) \in \Omega^k$,
\[
E_q(\omega \mapsto \Psi_i^{k+1}(\omega^K,\omega)) = r^k(\omega^K).
\]

\item 
$X^k$ are fibrewise convex-supermodular, and for any $0 \leq k \leq n-1$ and any $\omega^K \in \Omega^{k}$
\[
E_q(\omega \mapsto X^{k+1}(\omega^K,\omega) ) = r^k(\omega^K) \cdot X^k(\omega^K).
\]
\end{enumerate}
Then among all $\RR^{m+1}$-valued processes $\beta^0,\dots,\beta^{n-1}$ there exists a process $\alpha^0,\dots,\alpha^{n-1}$ which minimises the $\RR$-valued random process
\begin{equation}\label{E:L:linear programming:V}
V^k_\beta = \sum_{i=0}^m \beta^k_i \cdot S_i^k
\end{equation}
subject to the condition 
\begin{equation}\label{E:L:linear programming:constraint}
\sum_{i=0}^m \beta^k_i \cdot S_i^{k+1} \geq X^{k+1}.
\end{equation}
Moreover, $V_\alpha^k=X^k$ and the value of $\alpha^k(\omega^K)$ at $\omega^K \in \Omega^k$ can be computed by
\[
\begin{bmatrix}
\alpha^k_0(\omega^K) \\
\alpha^k_1(\omega^K) \\
\vdots \\
\alpha^k_m(\omega^K)
\end{bmatrix}
=
T(\omega^K)^{-1} \cdot M'(\omega^K)^{-1} \cdot Q \cdot 
\begin{bmatrix}
X^{k+1}(\omega^K \rho_0) \\
X^{k+1}(\omega^K \rho_1) \\
\vdots \\
X^{k+1}(\omega^K \rho_m)
\end{bmatrix}
\]
where $Q$ is the matrix in the statement of Theorem \ref{T:maximal hedging} and $T,M' \colon \Omega^k \to \OP{Mat}_{(m+1)\times (m+1)}(\RR)$ are the $(m+1) \times (m+1)$ matrices
\begin{eqnarray*}
&& T = 
\begin{bmatrix}
r^{k} \cdot S_0^k  \\
      & S_1^k \\
      &      &  \ddots \\
      &      &         & S_m^k 
\end{bmatrix}
\\
&&
M'=\left[
\begin{array}{c|cccc}
1 & d_1^k        & d_2^k    & \cdots & d_m^k      \\
\hline
0 & \Delta_1^k                                   \\
\vdots          &          & \ddots              \\
0 &             &          &        & \Delta_m^k
\end{array}\right]
\end{eqnarray*}
\end{lemma}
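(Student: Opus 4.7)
The plan is to reduce the problem to a one-step, state-by-state optimisation at each time $k$ and then exploit the dominance argument from Theorem~\ref{T:one step max}. Fix $0 \leq k \leq n-1$ and a state $\omega^K = (\omega^1,\dots,\omega^k) \in \Omega^k$. The inner problem is to minimise $V_\beta^k(\omega^K) = \sum_{i=0}^m \beta_i^k S_i^k(\omega^K)$ over $\beta^k(\omega^K) \in \RR^{m+1}$ subject to
\[
\sum_{i=0}^m \beta_i^k \cdot S_i^{k+1}(\omega^K,\omega) \ \geq \ X^{k+1}(\omega^K,\omega) \qquad \text{for all $\omega \in \Omega$.}
\]
By conditions (1) and (2), the left side, viewed as a function of $\omega \in \Omega$, is affine: $\Psi_0^{k+1}=r^k$ is constant and $\Psi_i^{k+1}=d_i^k+\Delta_i^k L_i$ is affine in $L_i(\omega)$.

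The lower bound $V_\beta^k \geq X^k$ follows by integrating the constraint against $q$. Since $q$ is supported on $\{\rho_0,\dots,\rho_m\}$, any feasible $\beta$ satisfies the inequality at these points, hence
\[
E_q\!\Big( \sum_i \beta_i^k S_i^{k+1}(\omega^K,-) \Big) \ \geq \ E_q\!\big( X^{k+1}(\omega^K,-) \big).
\]
Condition (2) identifies the left side as $\sum_i \beta_i^k S_i^k \cdot E_q(\Psi_i^{k+1}) = r^k V_\beta^k$, while condition (3) identifies the right side as $r^k X^k$. Dividing by $r^k>0$ gives the bound.

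To realise the bound, I define $\alpha^k(\omega^K) \in \RR^{m+1}$ as the unique solution of the linear system
\[
\sum_{i=0}^m \alpha_i^k \cdot S_i^{k+1}(\omega^K,\rho_j) \ = \ X^{k+1}(\omega^K,\rho_j) \qquad (j=0,\dots,m).
\]
Since $L_i(\rho_j) = \mathbf{1}_{i \leq j}$, a direct computation shows that the coefficient matrix equals $Q^{-1} M' T$, with $T, M', Q$ as in the statement; the diagonal of $M'T$ being $r^k S_0^k, S_1^k \Delta_1^k, \dots, S_m^k \Delta_m^k > 0$ ensures invertibility, and inversion yields the asserted formula $\alpha^k = T^{-1} M'^{-1} Q \cdot (X^{k+1}(\omega^K,\rho_j))_{j=0}^m$. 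Plugging $\alpha^k$ back into the $q$-integration above converts the inequality into equality, so $V_\alpha^k = X^k$ as desired.

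The main obstacle is to verify that $\alpha^k$ is actually feasible, i.e.\ that the constructed affine function dominates $X^{k+1}(\omega^K,-)$ at every $\omega \in \Omega$, not only at $\rho_0,\dots,\rho_m$. This is exactly where condition~(3) enters: $X^{k+1}(\omega^K,-)$ is convex-supermodular on $\Omega$ by hypothesis, while the affine function $\omega \mapsto \sum_i \alpha_i^k S_i^{k+1}(\omega^K,\omega)$ agrees with $X^{k+1}(\omega^K,-)$ on $\{\rho_0,\dots,\rho_m\}$ by construction. The argument of Claim~1 in the proof of Theorem~\ref{T:one step max} then shows, via supermodularity, that the affine function dominates $X^{k+1}(\omega^K,-)$ on all of $\L$, and Claim~2 extends the domination to $\Omega$ by convexity. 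Combined with the matching value $V_\alpha^k = X^k$, this establishes feasibility and optimality of $\alpha^k$ and completes the proof.
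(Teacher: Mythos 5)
Your proof follows the same route as the paper's: reduce to a one-step optimisation at each state $\omega^K$, obtain the lower bound $V_\beta^k \geq X^k$ by integrating the constraint against $q$ and using hypotheses (2) and (3), define $\alpha^k$ by matching the constraint at $\rho_0,\dots,\rho_m$, and extend the domination to $\L$ via supermodularity and then to $\Omega$ via convexity exactly as in the proof of Theorem~\ref{T:one step max}. The matrix factorisation $M=Q^{-1}M'T$ and the resulting formula for $\alpha^k$ are also as in the paper, so this is essentially the same argument.
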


\begin{proof}
We prove the lemma in a sequence of claims.

\noindent
{\em Claim 1:} If a process $\beta^0,\dots,\beta^{n-1}$ satisfies \eqref{E:L:linear programming:constraint} then $V_\beta^k \geq X^k$ for all $0 \leq k \leq n-1$.

\noindent
{\em Proof:}
Choose some $k$ and some $\omega^K \in \Omega^k$.
Then \eqref{E:L:linear programming:constraint} becomes the following system of (infinitely many) inequalities in the unknowns $\beta_i^k(\omega^K)$
\begin{equation}\label{E:L:linear programming:equivalent-constraints}
\sum_{i=0}^m \beta^k_i(\omega^K) \cdot S_i^{k+1}(\omega^K, \omega) \geq X^{k+1}(\omega^K, \omega), \qquad (\omega \in \Omega).
\end{equation}
Let $\Phi(\omega)$ denote the left hand side of \eqref{E:L:linear programming:equivalent-constraints} and $Y(\omega)$ denote its right hand side.
These are random variables with domain $\Omega$ so 
$E_q(\Phi) \geq E_q(Y)$.
By the hypotheses on $X^k$
\[
E_q(Y) = E_q(\omega \mapsto X^{k+1}(\omega^K \omega) = r^{k}(\omega^K) \cdot X^k(\omega^K).
\]
By the hypotheses on $S^k$ and $\Psi^k$
\begin{multline*}
E_q(\Phi) = 
\sum_{i=0}^m \beta_i^k(\omega^K) \cdot S_i^k(\omega^K) \cdot E_q(\omega \mapsto \Psi^{k+1}_i(\omega^K,\omega)) = 
\\
r^{k}(\omega^K) \cdot \sum_{i=0}^m \beta_i^k(\omega^K) \cdot S_i^k(\omega^K) = 
r^{k}(\omega^K) \cdot V_\beta^k(\omega^K).
\end{multline*}
Since $r^{k}>0$ it follows that $V_\beta^k(\omega^K) \geq X^k(\omega^K)$.
\hfill q.e.d

Consider some $0 \leq k \leq n-1$ and some $\omega^K=(\omega^,\dots,\omega^k) \in \Omega^k$.
We will show in Claim 5 below that the system of $m+1$ linear equations with $m+1$ unknowns $\alpha^k_0(\omega^K),\dots,\alpha^k_m(\omega^K)$
\begin{equation}\label{E:L:linear programming:vertex}
\sum_{i=0}^m S_i^k(\omega^K) \Psi^{k+1}_i(\omega^K,\rho_j) \cdot \alpha^k_i(\omega^K) = X^{k+1}(\omega^K,\rho_j), \qquad 0 \leq j \leq m
\end{equation}
has a unique solution given by the matrices $Q,M',T$ as in the statement of the lemma.
Since $S^k>0$ and $d^k$ and $\Delta^k>0$ are measurable, we obtain measurable functions $\alpha^k \colon \Omega^k \to \RR^{m+1}$ which form a process over $\Omega$
\[
\alpha^0,\dots,\alpha^{n-1}.
\]
We remark that \eqref{E:L:linear programming:vertex} are the inequalities in \eqref{E:L:linear programming:equivalent-constraints} corresponding to $\omega=\rho_0,\dots,\rho_m$ with inequalities turned into equalities.

\noindent
{\em Claim 2:} Consider some $\omega^K \in \Omega^k$ where $0 \leq k \leq n-1$.
Then $\alpha^k(\omega^K)$ solves the inequalities \eqref{E:L:linear programming:equivalent-constraints} for all $\lambda \in \L \subseteq \Omega$.

\noindent
{\em Proof:}
As in Claim 1, write $\Phi(\omega)$ for the left hand side of \eqref{E:L:linear programming:equivalent-constraints} and $Y(\omega)$ for the right.
The claim is that $\Phi(\lambda) \geq Y(\lambda)$ for all $\lambda \in \L$.
Assume this is false, namely $\Phi(\lambda) < Y(\lambda)$ for some $\lambda \in \L$.
Among all these $\lambda$'s choose one for which $j$ is maximal with $\rho_j \preceq \lambda$; see Definition \ref{D:vertices of cube L} and the discussion below it.
Clearly $j<m$ because by definition of $\alpha^k(\omega^K)$ we have $\Phi(\rho_i)=Y(\rho_i)$ for all $0 \leq i \leq m$ and because $\rho_m \in \L$ is maximal.
Set $\lambda'=\lambda \vee \rho_{j+1}$.
By the choice of $\lambda$ we get $\lambda \wedge \rho_{j+1}=\rho_j$.
Since $S^{k+1}_i(\omega^K, \omega)=S_i^k(\omega^K) \cdot \Psi^{k+1}_i(\omega)$ and since the assignment $\omega \mapsto \Psi^{k+1}_i(\omega^K,\omega)$ is an affine function on $\Omega$, it follows that $\Phi \colon \Omega \to \RR$ is affine.
Therefore
\[
\Phi(\lambda')+\Phi(\rho_j) = \Phi(\lambda)+\Phi(\rho_{j+1}).
\]
The assumption on $X^{k+1}$ implies that $Y|_\L$ is supermodular, hence
\[
Y(\lambda')+Y(\rho_j) \geq Y(\lambda)+Y(\rho_{j+1}).
\]
Subtracting these inequalities, keeping in mind that by construction $\Phi(\rho_i)=Y(\rho_i)$, we get
\[
\Phi(\lambda')-Y(\lambda') \leq \Phi(\lambda)-Y(\lambda)<0.
\]
Therefore $\Phi(\lambda')<Y(\lambda')$ and $\rho_{j+1} \preceq \lambda'$.
This contradicts the maximality of $j$.
\hfil q.e.d

\noindent
{\em Claim 3:} $\alpha^k(\omega^K)$ solves the inequalities \eqref{E:L:linear programming:equivalent-constraints} for all $\omega \in \Omega$.

\noindent
{\em Proof:}
Since $\Omega$ is the convex hull of $\L$, any $\omega \in \Omega$ is a convex combination $\omega = \sum_{\lambda \in \L} t_\lambda \cdot \lambda$.
The assumption on $X^{k+1}$ implies that $Y \colon \Omega \to \RR$ is convex.
Together with Claim 2 and since $\Phi$ is affine
\[
\Phi(\omega) = 
\Phi(\sum_{\lambda \in \L} t_\lambda \lambda) =
\sum_{\lambda \in \L} t_\lambda \Phi(\lambda) \geq
\sum_{\lambda \in \L} t_\lambda Y(\lambda) \geq 
Y(\sum_{\lambda \in \L} t_\lambda \lambda) =
Y(\omega).
\]
\hfill q.e.d

\noindent
{\em Claim 4:} $V_\alpha^k=X^k$.

\noindent
{\em Proof:}
Denote $q_j=q(\rho_j)$.
Consider some $\omega^K \in \Omega^k$.
Equation \eqref{E:L:linear programming:vertex} defining $\alpha^k(\omega^K)$ yields
\begin{align*}
r^{k}(\omega^K) \cdot X^k(\omega^K) &=
E_q(\omega \mapsto X^{k+1}(\omega^K, \omega)) 
\\
&=
\sum_{j=0}^m q_j X^{k+1}(\omega^K,\rho_j) 
\\
& =
\sum_{j=0}^m \sum_{i=0}^m \alpha^k_i(\omega^K) S_i^k(\omega^K) \cdot q_j \Psi^{k+1}_i(\omega^K,\rho_j) 
\\
&=
\sum_{i=0}^m \alpha^k_i(\omega^K) S_i^k(\omega^K) \cdot E_q(\omega \mapsto \Psi^{k+1}_i(\omega^K,\omega)) 
\\
&=
r^{k}(\omega^K) \cdot V_\alpha^k(\omega^K).
\end{align*}
Since $r^k>0$ it follows that $V_\alpha^k(\omega^K)=X^k(\omega^K)$.
\hfill
q.e.d

Claim 3 implies that $\alpha^0,\dots,\alpha^{n-1}$ solve all the inequalities \eqref{E:L:linear programming:equivalent-constraints} and hence it solves the constraints  \eqref{E:L:linear programming:constraint}.
Claims 1 and 4 imply that $V_\alpha^k \leq V_\beta^k$ for all $k$ and all $\beta^0,\dots,\beta^{n-1}$ that satisfy \eqref{E:L:linear programming:constraint}.
To complete the proof of the lemma it only remains to prove:

\noindent
{\em Claim 5:} The system of equations \eqref{E:L:linear programming:vertex} has a unique solution given by the matrices $Q,T,M'$ as in the statement of the lemma.

{\em Proof:} 
Consider $0 \leq k \leq n-1$ and $\omega^K \in \Omega^{k}$.
For $0 \leq i,j \leq m$ set $\chi_i^{k+1}(j)=\Psi^{k+1}_i(\omega^K,\rho_j)$.
Notice that by the hypotheses 
\[
\chi_0^{k+1}(j)=r^{k}(\omega^K).
\]
For $1 \leq i \leq m$ observe that $L_i^{k+1}(\rho_j) = 1$ if $i \leq j$ and $L_i^{k+1}(\rho_j) = 0$ if $i>j$.
Since $\Psi^{k+1}=d_i^k+\Delta_i^k(\omega^K) L_i^{k+1}$ we get 
\[
\chi_i^k(j)= \left\{
\begin{array}{ll}
d_i^k(\omega^K)+\Delta_i^k(\omega^K)  & i \leq j \\
d_i^k(\omega^K)             & i> j
\end{array}\right.
\]
Since $S_i^{k+1}(\omega^K,\rho_j)=S_i^k(\omega^K) \cdot \Psi^{k+1}_i(\omega^K,\rho_j)$, the matrix representing the system \eqref{E:L:linear programming:vertex} is
\begin{multline*}
M = 
\begin{bmatrix}
S_0^k \chi_0^{k+1}(0) & S_1^k \chi_1^{k+1}(0) & \cdots & S_m^k \chi_m^{k+1}(0) \\
S_0^k \chi_0^{k+1}(1) & S_1^k \chi_1^{k+1}(1) & \cdots & S_m^k \chi_m^{k+1}(1) \\
\vdots              &                     &        &         \vdots \\
S_0^k \chi_0^{k+1}(m) & S_1^k \chi_1^{k+1}(m) & \cdots & S_m^k \chi_m^{k+1}(m)  
\end{bmatrix}
=
\\
\underbrace{
\begin{bmatrix}
1      & \chi_1^{k+1}(0) & \cdots & \chi_m^{k+1}(0) \\
1      & \chi_1^{k+1}(1) & \cdots & \chi_m^{k+1}(1) \\
\vdots &                &        &         \vdots \\
1      & \chi_1^{k+1}(m) & \cdots & \chi_m^{k+1}(m)  
\end{bmatrix}
}_{M''}
\cdot
\underbrace{
\begin{bmatrix}
r^{k} S_0^k    &   \\
               & S_1^k \\
               &      & \ddots \\
               &      &        & S_m^k 
\end{bmatrix}
}_{T}
\end{multline*}
with all entries evaluated at $\omega^K$.
Thus, $M,M'',T$ are functions $\Omega^k \to \OP{Mat}_{(m+1)\times(m+1)}(\RR)$.
Oserve that for all $1 \leq i \leq m$ and $1 \leq j \leq m$
\[
\chi_i^{k+1}(j)-\chi_i^{k+1}(j-1)=
\left\{
\begin{array}{ll}
\Delta_i^{k+1}(\omega^K)    & \text{if $i = j$} \\
0                & \text{if $i \neq j$}
\end{array}\right.
\]
It follows that
\[
Q \cdot M'' = 
\left[
\begin{array}{c|cccc}
1       & d_1^{k}      & d_2^{k}     & \cdots & d_m^{k} \\
\hline
0       & \Delta_1^{k} &              &        &          \\
0       &               & \Delta_2^{k} &                   \\
\vdots  &               &               & \ddots &           \\
0       &              &               &         & \Delta_m^{k}   
\end{array}
\right]
\]
with these matrices evaluated at $\omega^K$.
We denote the latter matrix by $M'$ and notice that it is invertible since $\Delta_i^{k} >0$.
In particular $M(\omega^K)$ is invertible for any $\omega^K \in \Omega^k$  so \eqref{E:L:linear programming:vertex} has a unique solution.
Note that  $M^{-1}=T^{-1} \cdot M'{}^{-1} \cdot Q$ is a measurable function $\Omega^n \to \OP{Mat}_{(m+1) \times (m+1)}(\RR)$ and the solution of \eqref{E:L:linear programming:vertex} is therefore the one given in the statement of the lemma.
\end{proof}

\begin{proof}[Proof of Theorem \ref{T:maximal hedging}]
We apply Lemma \ref{L:linear programming for hedging} with the following data.
The probability measure $q$ is the upper supervertex $q^* \colon \L \to \RR $ (Definition \ref{D:upper supermodular vertex}) extended to a probability measure on $\Omega$.
The processes $S^0,\dots,S^n$  and $\Psi^1,\dots,\Psi^n$ are the prices $S^k=(S_0^k,\dots,S_m^k)$ of the assets and their price jumps $\Psi^k=(\Psi_0^k,\dots,\Psi_m^k)$.
The process $r^0,\dots,r^{n-1}$ consists of the constant functions with value $R$.
The processes $d^0,\dots,d^{n-1}$ and $\Delta^0,\dots,\Delta^{n-1}$ have components $d_i^k$ and $\Delta_i^k$ ($1 \leq i \leq m$) where $d_i^k$ is constant with value $D_i$ and $\Delta_i^k$ is constant with value $U_i-D_i$.
The process $X^0,\dots,X^n$ are the upper bound of the option's $F$ price at time $k$, namely $X^k=\Gamma_{\max}(F,k)$.

We need to show that the conditions of the lemma are fulfilled.
First, $S_i^k>0$ for all $k$.
By construction of the model, $S_i^{k+1}=S_i^{k} \cdot \Psi_i^{k+1}$  and $\Psi_0^{k+1}=R=r^k$ and $\Psi_i^{k+1}=D_i+(U_i-D_i)L_i^k=d_i^k+\Delta_i^k L_i^{k+1} $ for all $0 \leq k \leq n-1$.
Also, 
\[
E_{q}(\omega \mapsto \Psi_i^{k+1}(\omega^K,\omega))=E_q(D_i+(U_i-D_i)\ell_i(\omega)) = D_i+(U_i-D_i)b_i=R=r^{k}(\omega^K)
\]
by construction of $q^*$ (Definition \ref{D:upper supermodular vertex}).
By Theorem \ref{T:rational values interval ends} 
\[
X^k(\omega^K)=R^{k-n} \sum_{J \in \C P_{n-k}(m)} q_J \cdot F(\omega^K,\rho_J).
\]
Since $F$ is fibrewise convex-supermodular by Proposition \ref{P:European basket is fibrewise convex-supermodular}, $X^k$ is a linear combination with non-negative coefficients of fibrewise convex-supermodular functions, hence it is one as well.
Finally, we check that
\begin{align*}
E_{q^*}(\omega \mapsto X^{k+1}(\omega^K,\omega)) & =
E_{q^*}\left( \omega \mapsto R^{k+1-n} \sum_{J \in \C P_{n-k-1}(m)} q_J \cdot F(\omega^K,\omega,\rho_J)\right) 
\\
& = 
R^{k+1-n} \sum_{j=0}^m q_j \sum_{J \in \C P_{n-k-1}(m)} q_J F(\omega^K,\rho_j,\rho_J)
\\
& =
R^{k+1-n} \sum_{J \in \C P_{n-k}(m)} q_J F(\omega^K,\rho_J) 
\\
& =
R \cdot X^k(\omega^K) 
\\
& = 
r^k(\omega^K) \cdot X^k(\omega^K).
\end{align*}
All the conditions of Lemma \ref{L:linear programming for hedging} are fulfilled and we obtain a process $\alpha^0,\dots,\alpha^{n-1}$ which minimises
\[
V_\alpha^k=\sum_{i=0}^m \alpha_i^k S_i^k
\]
subject to the requirement that for all $0 \leq k \leq n-1$
\[
\sum_{i=0}^m \alpha_i^k S_i^{k+1} \geq X^{k+1}=\Gamma_{\max}(F,k+1).
\]
Thus, $\alpha(k)=\alpha^k$ is a minimum-cost maximal hedging strategy as required with the formulas for its value given in the statement of the theorem.
It only remains to note that $Y_t(k)$ at the state of the world $\omega^K \in \Omega^k$ used in the statement of the theorem is precisely $X^{k+1}(\omega^K,\rho_t)$ because 
\begin{align*}
Y_t(k)(\omega^K) &= 
R^{k+1-n} \sum_{J \in \C P_{n-k-1}(m)} q_J\cdot (\sum_{i=0}^m c_i \chi_i(J) \chi_i(t) S_i^k(\omega^K)-K)^+
\\
&=
R^{k+1-n} \sum_{J \in \C P_{n-k-1}(m)} q_J\cdot (\sum_{i=0}^m c_i \chi_i(J) \Psi_i(\rho_t) S_i^k(\omega^K)-K)^+
\\
& =
R^{k+1-n} \sum_{J \in \C P_{n-k-1}(m)} q_J\cdot (\sum_{i=0}^m c_i \chi_i(J) S_i^k(\omega^K,\rho_t)-K)^+
\\
& = 
\Gamma_{\max}(F,k+1)(\omega^K,\rho_t)
\\
& =
X^{k+1}(\omega^K,\rho_t).
\end{align*}
\end{proof}

\bibliography{bibliography}
\bibliographystyle{plain}

\end{document}